\definecolor{ForestGreen}{rgb}{0.1333,0.5451,0.1333}
\newcommand{\showccc}[0]{0}
\newcommand{\ccc}[2][nothing]{
  \ifthenelse{\showccc=0}{}{
    \ensuremath{^{\Lsh\Rsh}}\marginpar{\raggedright\tiny\textsf{%
        \ifthenelse{\equal{#1}{nothing}}{}{\textbf{#1}\\}#2}}}}
\newcounter{hours}\newcounter{minutes}
\newcommand{\hhmm}{%
  \setcounter{hours}{\time/60}%
  \setcounter{minutes}{\time-\value{hours}*60}%
  \ifthenelse{\value{hours}<10}{0}{}\thehours:%
  \ifthenelse{\value{minutes}<10}{0}{}\theminutes}
\newtheorem{theorem}{Theorem}[section]
\newtheorem{definition}[theorem]{Definition}
\newtheorem{lemma}[theorem]{Lemma}
\newcommand{\softO}{\tilde{O}}
\newcommand{\up}{5}
\newcommand{\st}{\mathrm{stretch}}
\newcommand{\sff}{s_f}
\newcommand{\tff}{t_f}
\newenvironment{proof}
  {\vskip5pt\hspace*{15pt}{\it Proof}.\hskip10pt}{\qed\vskip5pt}
\def\qed{\nobreak\hskip1pt~$\;\;\scriptstyle\Box$}
\DeclareMathOperator{\trace}{trace}
\DeclareMathOperator{\ccap}{cap}
\begin{document}

\title{Faster spectral sparsification \\ and numerical algorithms for SDD matrices}
\author{Ioannis Koutis \\
CSD-UPRRP \\ ioannis.koutis@upr.edu   \and Alex Levin  \\ MATH-MIT  \\ levin@mit.edu \and Richard Peng \\  CSD-CMU \\yangp@cs.cmu.edu}

\maketitle
\maketitle
\begin{abstract}
We study algorithms for spectral graph sparsification. The
input is a graph $G$ with $n$ vertices
and $m$ edges, and the output is a {sparse} graph $\tilde{G}$ that approximates $G$
in an algebraic sense. Concretely, for all vectors $x$ and any $\epsilon>0$,
the graph $\tilde{G}$ satisfies
$$
     (1-\epsilon) x^T L_G x  \leq x^T L_{\tilde{G}} x \leq (1+\epsilon) x^T L_G x,
$$
where $L_G$ and $L_{\tilde{G}}$ are the Laplacians of $G$ and $\tilde{G}$ respectively.

We show that the fastest known
algorithm for computing a sparsifier with $O(n\log n/\epsilon^2)$ edges
can actually run in $O(\sff m\log^2 n)$ time\footnote
{We denote by $\sff$ and $\tff$ the $O(\log \log n)$ factors that appear
in the stretch and time guarantees of the best currently known algorithm
for computing low-stretch trees \cite{AbrahamN12}.
 It is conjectured that $s_f=O(1)$ and $t_f=O(1)$ is possible.},
 an $O(\log n)$ factor faster than before.
 We also present faster sparsification
 algorithms for slightly dense graphs.  Specifically, we give an algorithm that
runs in $O(\tff m\log n)$  time and generates a sparsifier with $O(\sff n\log^3{n}/\epsilon^2)$ edges.
We also give an $O(m\log \log n)$ time algorithm
for graphs with more than $\sff^2 n\log^\up n \log \log n$ edges and an $O(m)$ algorithm
for graphs with more than $\tff^2 \sff^3 n\log^{10} n  $ edges and unweighted graphs
with more than $\sff^3 n\log^8 n  $ edges.
The improved sparsification algorithms are employed to accelerate
linear system solvers and algorithms  for computing fundamental eigenvectors of slightly dense SDD matrices.

\end{abstract}

\section{Introduction}

The efficient transformation of dense instances of
graph problems to nearly equivalent sparse instances is a
very powerful tool in algorithm design.
The idea, widely known
as {\em graph sparsification},  was originally
introduced by Bencz{\'u}r and
Karger \cite{bk} in the context of cut problems. Spielman
and Teng \cite{SpielmanTeng04} generalized the {\em cut-preserving
sparsifiers} of Bencz{\'u}r and
Karger to the more powerful {\em spectral sparsifiers},
which preserve in an algebraic sense the Laplacian
matrix of the dense graph. The main motivation of spectral
sparsifiers was the design of nearly-linear time algorithms for the
solution of symmetric diagonally dominant (SDD) linear systems. A matrix $A$
is SDD if it is symmetric and for all $i$, $A_{ii} \geq \sum_{j\neq i} |A_{ij}|$.

Bencz{\'u}r and Karger  proved that,
 for arbitrary $\epsilon,$ cuts can
be preserved within a factor of $1\pm\epsilon$ by a graph
with $O(n \log n/\epsilon^2)$ edges. This graph
 can be computed
by a randomized algorithm that runs in $O(m\log^3 n)$ time\footnote{All sparsification algorithms in this paper
are randomized with a probability failure inversely proportional to $n$. They consist of a preprocessing phase followed by the generation of the sparsifier which in general
can be performed in time proportional to the number of edges in it (e.g. $O(n\log n/\epsilon^2)$).  For the sake of conciseness our running time statements will include only the time for preprocessing and will omit the failure
probability.},
where $m$ is the number of edges in the dense graph.
Spielman and Teng gave the first construction
of spectral sparsifiers,
but   the edge count  of these objects was
several log factors bigger than  that of Bencz{\'u}r and
Karger's cut-preserving sparsifiers.
However, recent progress that we review below allows for
the construction of spectral sparsifiers
with $O(n\log n/\epsilon^2)$ edges in
${O}(\sff m\log^3 n)$ time.

Sparsification can be employed to immediately accelerate
algorithms for numerous problems. In several cases and depending
on the density of the instance, the sparsification
routine dominates the running time of the sparsifier-enhanced
algorithm. This is a strong incentive for speeding up the construction
of sparsifiers even further.

This problem was  undertaken in the context of cut-preserving sparsifiers
by Fung {\it et al.}~\cite{FungHHP11}. Improving upon the work of
Bencz{\'u}r and Karger, they proved that there is an $O(m\log^2 n)$ time
algorithm that computes a sparsifier with $O(n \log n/\epsilon^2)$ edges.
This stands as the fastest known algorithm with this sparsity guarantee for
general graphs. However, Fung \textit{et al.}\ also showed that we can do even
better on slightly more dense graphs. More concretely,
they proved that there is an $O(m)$ time algorithm that computes
a sparsifier with $O(n \log^2 n/\epsilon^2)$ edges.
Note  that by transitivity, a combination of the two algorithms can
produce a graph with $O(n \log n/\epsilon^2)$ edges in $O(m + n \log^4 n)$ time.
In other words, there is a linear time sparsification algorithm for graphs with more
than $n\log^4 n$ edges.

This leads us to the main question we address in this paper:
Is something analogous possible for spectral sparsification?
We answer the question in the affirmative. We first show that a slight modification of the
Spielman-Srivastava algorithm \cite{effres} can improve the run time to $O(\sff m\log^2 n )$.
This nearly matches the general case algorithm of~\cite{FungHHP11}.
We present three additional
sparsification algorithms.
The first is a variation of the Spielman-Srivastava algorithm
that generates a sparsifier with $O(\sff n\log^3 n/\epsilon^2)$
edges in $O(\tff m\log n)$ time.
The second  produces  a sparsifier with
$O(n\log n/\epsilon^2)$ edges in
$O(m\log \log n)$ time, assuming the input has more than $\sff^2 n\log^\up n \log \log n $ edges.
The third produces a sparsifier with
${O}(n\log n/\epsilon^2)$ edges in $O(m)$ time
assuming the input is unweighted and has more than $\sff^3 n\log^8 n$ edges,
or if it has more than $\tff^2 \sff^3 n\log^{10} n $ in the weighted case.

\subsection*{Applications in numerical algorithms}

The $(1\pm\epsilon)$-sparsifiers we obtain can be employed
in a standard way  as preconditioners for SDD linear systems,
giving us faster solvers for slightly dense graphs: (i)
an $O(m\log \log n)$ time solver for systems with more than $\sff^2 n\log^\up n \log \log n$
non-zero entries and (ii) an $O(m)$ time solver
for Laplacians of graphs with more than $\tff^2\sff^3 n\log^{10} n$
non-zero entries. The best previously known algorithm \cite{kmp11} runs in
$O(\sff m \log n \log(1/\delta))$ time.

In addition, our sparsification algorithms accelerate the computation of an approximate Fiedler eigenvector
of a graph Laplacian $L_G$. An $(1+ \epsilon)$-approximate eigenvector is a unit norm vector $x$ such that $x^T L_G x$
is within a factor $1+ \epsilon$ from the eigenvalue $\lambda_2$ of $L_G$.
The algorithm consists of two steps: (i)~computing a spectral sparsifier $\tilde{G}$ that $(1\pm \epsilon/2)$-approximates
the input graph $G$ and
 (ii)~computing a $(1+ \epsilon/3)$-approximate eigenvector of $\tilde{G}$; this will automatically be an $(1\pm \epsilon)$-approximate eigenvector of the
(more) dense input graph because the spectral sparsification step preserves the eigenvalues of $G$ within $1\pm \epsilon/2$.
Hence combining our sparsification algorithms with the inverse power method \cite{stlinsolve} (which consists
of solving $O(\log n \log(1/\epsilon))$ systems in $L_{\tilde{G}}$)
gives an approximate eigenvector in
$O(m + \sff^2 n\log^5 n\log(1/\epsilon)/\epsilon^2)$ time.
The fastest previously  known algorithm runs in time $O(\sff m \log^2 n \log(1/\epsilon))$.
The same result applies to the computation of the Fiedler eigenvector of a normalized Laplacian $D^{-1/2} L_G D^{-1/2}$; applying the inverse power method on $D^{-1/2} L_{\tilde{G}} D^{-1/2}$ gives the required eigenvector.


 We note
here that one practical application of eigenvectors is in partitioning
algorithms;  the analysis of Cheeger's inequality \cite{chung07}
tells us how to turn an approximate
Fiedler vector into  a partition.
Hence, we give an improvement   to the running time
of  a fundamental graph partitioning algorithm.
Finally we note that the computation of additional eigenvectors
can be performed in the same
amount of time (per vector) by restricting the action
of the matrix to the complement of the subspace spanned by the
previously computed eigenvectors.

\section{Overview of our techniques}

\subsection{Brief background on spectral sparsification}

The first algorithm for edge-efficient spectral
sparsifiers was given by Spielman and Srivastava \cite{effres}.
Their algorithm produces a sparsifier with $O(n\log n/\epsilon^2)$
edges in a very elegant way:  it samples
edges with replacement.
The  probability of sampling an
edge is proportional to
its weight multiplied by  its effective resistance
in the resistive electrical network associated
with the given graph.

Computing the effective resistance of a given edge
requires---almost by definition---the solution
of a linear system on the graph Laplacian.\footnote{Laplacian matrices
are SDD.}  However, Spielman and Srivastava
also provided a way of estimating {\em all} $m$
effective resistances, via solving $O(\log n)$ SDD linear systems.
This holds under the assumption that the SDD solver is direct,
i.e.\ it outputs an exact solution. The use of a nearly-linear
time {\em iterative} solver that computes approximate
solutions introduces an additional source of imprecision;
Spielman and Srivastava showed that solving the systems
up to an inverse polynomial precision is sufficient
for sparsification. This brings the running time
of their algorithm to $O(\sff m\log^{c+2} n)$, where
$c$ is the constant appearing in the running time
of the SDD solver.

\subsection{The $O(\sff m\log^2 n)$ time algorithm}
While the work of Spielman and Srivastava did not
improve the running time of the SDD solver, it proved to be
a decisive step towards the fast SDD solver of
Koutis, Miller, and Peng \cite{kmplinsolve,kmp11},
which  runs in time
$O(\sff m \log n \log(1/\delta))$, where
$\delta$ is the desired precision. Using
this solver in the Spielman and Srivastava sparsification
sampling scheme immediately yields an
$O(\sff m\log^{3} n/\epsilon^2)$ time algorithm.
This brings us to the first contribution of this paper, a tighter analysis
of the Spielman and Srivastava algorithm. In Section \ref{sec:improve}
we show that solving the systems up to  \emph{fixed} precision is actually sufficient
for sparsification. This decreases the running time to $O(\sff m\log^2 n/\epsilon^2)$.

\subsection{Faster algorithms: The main idea}

To get our two faster algorithms, we will trade accuracy
in the computation of effective resistances for speed.
The idea is to transform the input graph $G$ into another
graph $H$ where  effective resistances can be computed faster while still
providing good  bounds for the true effective resistances in $G$.
These approximate effective resistances can still be used
for sparsification at the expense of additional sampling
\cite{kmplinsolve} that yields slightly more dense sparsifiers.
These sparsifiers can be re-sparsified to $O(n\log n/\epsilon^2)$
edges by applying the fast general-case algorithm.

\subsection{The $O(\tff m\log n)$ time algorithm}

The $O(\tff m\log n)$ time algorithm is based on the
observation that the Spielman-Srivastava scheme can be
implemented to run in $O(\tff m\log n)$ time on a {\em spine-heavy} approximation
$H$ of $G$. The spine-heavy graph $H$ is derived in $O(\tff m\log n)$ time
from $G$ by computing a low-stretch tree of $G$ and scaling it up by a
$O(\sff \log^2 n)$ factor. In \cite{kmp11} it was shown that
linear systems involving the Laplacian of $H$ can be solved in $O(m)$ time,
enabling the faster implementation of the Spielman-Srivastava scheme on $H$.
At the same time the effective resistances in $H$ are at
most a $O(\sff \log^2 n)$ factor smaller than those in $G$.
Sampling with respect to these estimates, allows us to get a sparsifier $\tilde{G}'$ with $O(\sff n\log^3 n/\epsilon^2)$
edges.  Re-sparsifying $\tilde{G}'$  gives a sparsifier $\tilde{G}$ with $O(n \log n/\epsilon^2)$ edges
in $O(\sff^2 n\log^5 n/\epsilon^2)$ time.
 The details are given in Section~\ref{sec:improve}.

\subsection{The $O(m\log \log n)$ and $O(m)$ time algorithms}

There are two major bottlenecks in the $O(\tff m\log n)$ time algorithm. In order
to work around them, we introduce several ideas of independent interest.

\medskip
{\bf (a)} The first bottleneck is in the computation of the low-stretch tree;
all known algorithms for computing a low-stretch tree run in time at least $O(m \log n)$.
The solution to this problem involves two steps. We first observe that
if we can settle for
 a weaker stretch guarantee, it is enough to find a low-stretch
tree of a {\em subgraph} $H$ of the input graph $G$ which preserves {\em all} cuts of $G$ within a
polylogarithmic factor. This allows the existing low-stretch tree computation to run faster,
assuming the subgraph is sparser by an $O(\tff \log n)$ factor.
The subgraph $H$ can be viewed as an approximate cut sparsifier of $G$, a notion
very similar to the incremental spectral sparsifiers from \cite{kmplinsolve}; these
are computed by first finding a low-stretch tree and sampling off-tree edges
with probability proportional to their stretch over the tree. Inspired by this
idea, we give an even simpler algorithm for computing the graph $H$:
we find a maximum weight spanning tree and then we sample uniformly the off-tree edges.
The proof that a variant of this simple procedure returns the desired incremental cut sparsifier relies
on Karger's earlier work on cut sparsification \cite{Karger98}.

{\bf (b)} Having removed the low-stretch tree computation obstacle we can now
attempt to mimic the steps of the $O(\tff m \log n)$ time algorithm. In fact
it is possible to take care of the system-solving part of the Spielman-Srivastava scheme in $O(m)$
time by merely scaling-up the low-stretch tree by a larger factor. However
this is not enough; there is still a bottleneck that lies in the computations
{after} the solution of the linear systems in the heavier-spine graphs.
These are $m$ simple manipulations of vectors of dimension $O(\log n)$. In an earlier version of this work,
we attempted to work around this problem by reducing the dimension of these vectors, but
at a significant loss of sparsity \cite{KoutisLP12}.

Here we take a different route. We first  use the low-stretch tree to construct
an {\em approximate sparsifier} i.e.\
a sparse approximation for the input graph, but of moderate quality.
The significant departure from the Spielman-Srivastava
scheme comes in the next step which computes estimates
of the effective resistances using a combinatorial rather than
algebraic approach. Concretely, we observe that with some additional work we can `leverage'
the approximate sparsifier to compute a sparsifier for $G$ as well.

Indeed, let $H$ be a $\kappa$-approximation of $G$
(see Definition~\ref{defn:kappaapprox}) and
$\tilde{H}$ be a sparsifier of $H$ with $O(n\log n)$ edges;
$\tilde{H}$ is the approximate sparsifier.
Then we  generate a low-stretch spanning tree
$T$ of $\tilde{H}$ in $O(\sff n\log^2 n)$ time  and approximate
the effective resistances of $G$ over $T$ in $O(m)$ time.
We will be able to claim that these approximate values are
enough to generate a sparsifier $\tilde{G}'$ for $G$ with
$O(\sff n \kappa \log^3 n)$ edges. Finally from $\tilde{G}'$ we can compute a sparsifier $\tilde{G}$
with $O(n\log n/\epsilon^2)$ edges in $O(\sff^2 \kappa n\log^5 n)$ time
using our first algorithm.

We will derive our $O(m)$ algorithm via a single application
of the above `leveraging' idea for $\kappa = O(\sff^3 \log^5 n)$.
To improve over the $O(\sff m\log n)$ algorithm  for even sparser graphs
 we will progressively sparsify
a sequence of $t=O(\log \log n)$ graphs
$H={H}_0,{H}_1,\ldots,{H}_t=G$, such that ${H}_{i}$ is a $2$-approximation
of ${H}_{i+1}$: given the sparsifier
for ${H}_i$ we can construct the sparsifier for
${H}_{i+1}$ via the leveraging idea. The details are given in Section \ref{sec:fastest}.

\section{Background on spectral graph theory and sparsification} \label{sec:background}

\subsection{The graph Laplacian and its pseudoinverse}
Let $G = (V,E,w)$ be an undirected weighted graph on  $n$ vertices,
which we identify with the integers
$\{1,2,\ldots, n\},$  and $m$
edges, where the
weight of edge $e$ is given by $w_e.$ Without loss of generality
we will assume that minimum weight is $1$.
We will also assume that matrices discussed
 below are represented as
adjacency lists.

The Laplacian of $G$ is denoted by $L_G.$ It is a symmetric
$n\times n$ matrix with zero row and column sums, where
 the $(i,j)$ off-diagonal
entry is given by  $-w_{(i,j)}$ if $(i,j)$ is an edge of $G$
and $0$ otherwise. The $i$th diagonal entry
is given by the weighted degree of vertex $i.$

If $G$ is a connected graph, then   $L_G$   is
a matrix of rank $n-1,$   with   its
kernel  spanned by $\mathbf{1}$ (the vector of all $1$'s).
We let $L_G^+$ denote   the
Moore-Penrose   pseudoinverse   of   $L_G$; this   is a matrix
that   acts as   the  inverse of $L_G$ on $(\ker L_G)^\perp,$
and satisfies $L_G^+ L_G = L_G L_G^+  = I_{n-1},$
where $I_{n-1}$ is the projection   onto the $(n-1)$-dimensional
image   of   $L_G.$

Given the one-to-one correspondence of
graphs and their Laplacians we will often apply algebraic notation
to graphs, with the obvious meaning.
\subsection{Spectral approximation and sparsification}

In this paper we concentrate on symmetric diagonally dominant
matrices. For two matrices $A$ and $B$ of the same  dimension, we
write $A\preceq B$ if $x^T A x  \leq x^T B x$ for all vectors
$x.$ For two graphs $G$ and $H,$ we write $G\preceq H$ if the
Laplacians satisfy $L_G \preceq L_H.$

\begin{definition}
\label{defn:kappaapprox}
We say that a graph  $H$ is a $\kappa$-approximation of
a graph $G$ if $G\preceq H \preceq\kappa G.$
\end{definition}
It is not hard to show that if $H$
is a graph that   $\kappa$-approximates a graph $G$
then   we have
\begin{equation} \label{eq:inverse}
 \frac{1}{\kappa} L_G^+ \preceq L_H^+  \preceq L_G^+
\end{equation}

\begin{definition}
Given a graph $G,$
we say that a (sparser) graph $H$ is a $1\pm\epsilon$ {\bf spectral sparsifier}
of $G$   if
\begin{equation} \label{eq:1pmepsilon}
(1-\epsilon) G \preceq H \preceq (1+\epsilon)  G.
\end{equation}
\end{definition}

It is easy to see that if $H$ is a $1\pm\epsilon$ spectral
sparsifier of $G$ then $\frac{1}{1-\epsilon} H$
is a graph that $\frac{1+\epsilon}{1-\epsilon}$-approximates
$G.$ By the definition, it is also easy
to verify {\bf transitivity}. If $G_1$ is
a $1\pm \epsilon_1$ sparsifier of $G$ and $G_2$ is
a $1\pm \epsilon_2$ of $G_1$ then $G_2$ is a
$(1\pm \epsilon_1)(1\pm \epsilon_2)$ sparsifier of $G$.

\subsection{Graphs as resistive electrical networks}

We can consider our graph $G$ as  an electrical network of
nodes (vertices) and wires (edges), where edge $e$
has resistivity  of $w_e^{-1}$ Ohms.

In this context it is very useful to give another definition of the Laplacian $L_G$,
in terms of its incidence matrix $B_G$. To define $B_G$, fix
an arbitrary orientation for each edge in $G$. For a vertex $i$
let $\chi_i$ be its ($n\times 1$)
characteristic vector,   with a $1$ at the $i$th entry
and $0$'s everywhere else.  Let $e=(i,j)$ be an edge
and  define  $b_e  = \chi_i - \chi_j.$
Then $B_G$ is the $m\times n$ matrix whose $e$th row
is the vector $b_e$.  Let $W_G$ be the $m\times m$ diagonal matrix
whose $e$th diagonal entry is $w_e$.
With these definitions, it is easy to verify that
$$
    L_G = B_G^T W_G B_G = \sum_{e\in G} w_e b_e b_e^T.
$$

For notational convenience,  we will drop
the subscripts on $L_G,$ $B_G,$ and $W_G$ when the
graph we are dealing with is clear from context.

Going back to the electrical analogy,  the \emph{effective resistance}
between vertices $i$ and $j$, denoted by $R^G(i,j)$ or $R^G(e)$ when $(i,j)$
is an edge  $e$, is the voltage difference that has to be applied between $i$ and $j$
in order to drive one unit of external current between the two vertices.
Algebraically it is given by
\begin{equation} \label{eq:efres}
R^G(i,j) = (\chi_i - \chi_j)^T  L_G^+ (\chi_i  - \chi_j)
\end{equation}

The above equation  allows us to apply  \eqref{eq:inverse} and
see that
\begin{equation} \label{eq:approxres}
G \preceq H \preceq \kappa G \Rightarrow (1/\kappa) R^G(e)\leq R^H(e) \leq R^G(e).
\end{equation}

The
definition of the effective resistance for $(i,j)$ in
\eqref{eq:efres} shows directly that it can be computed
by solving the system $L_G x = (\chi_i  - \chi_j)$. In light
of this, \eqref{eq:approxres} will be
of {\em central importance} in our
proofs. Informally, it states that
if $H$ is a $\kappa$-approximation of $G,$ then
the effective resistance
of any edge in $G$ can be approximated  by the
effective resistance of the same edge in $H$, which  can be
done  by solving the system $L_H x = (\chi_i  - \chi_j)$.
This will allow us to construct special approximations
$H$ for which solving with $L_H$ is easier than with $L_G$.

\subsection{Low-stretch subgraphs, spine-heavy graphs, and SDD solvers}

Let $S$ be a graph on the same vertex set as a graph $G$.
Let $e=(i,j)$ be an edge of $G$.
If $p$ is a path $e_1,e_2,\ldots,e_\nu$ between $i$ and $j$ in $S$
we say that the \emph{stretch of $e$ over $p$} is
$\st_p(e) := w_e \sum_{i=1}^\nu w_{e_i}^{-1},$ i.e.\
the weight of $e$ multiplied by the sum of inverse weights
of tree edges on the path from $i$ to $j.$
If ${\cal P}(e)$ is the set of all paths between $i$ and $j$ in $S$
we define
$$
    \st_S(e) = \min_{p\in {\cal P}(e)} \st_p(e).
$$
We will use the term \emph{stretch of $e$ over $S$} for $\st_S(e).$
The definition is simpler when $S$ is a tree. In
this case there is a unique path between the endpoints of $e$.
We denote by $\st_S(G)$  the sum of stretches in $S$ of all
edges of $G,$ i.e.\ $$\st_S(G) = \sum_{e\in G} \st_S(e).$$

It is known that every graph $G$ has a spanning tree $T$ with
$\st_T(G) = O(m \log n \log \log n)$,
known as a {\bf low-stretch tree}. The tree
can computed in $O (m \log n \log \log n)$ time
\cite{AbrahamN12}. Because these guarantees are still
open to improvement we will state our results with
respect to two parameters: We will
denote by $\tff$ the factor in excess of $O(m\log n)$
in the time required for computing 
a low-stretch tree on a graph with $m$ edges, via the
algorithm in \cite{AbrahamN12}.
Similarly, we will denote by 
$\sff$ the factor in excess of $O(m\log n)$
in $\st_T(G)$ provided by the same algorithm.
That is, as noted above, the best current guarantees are $\sff=O(\log\log n)$
and $\tff = O(\log \log n)$.

We call a graph
{\bf spine-heavy} if it has a spanning tree
with  $\st_T(G) = O(m / \log n)$. Given a graph
$G$ we can compute a
spine-heavy graph $H$ that $O(\sff \log^2 n )$-approximates
it  by computing a low-stretch
 tree and then scaling up
 the weights of tree edges in $G$
by the  $O(\sff \log^2 n)$ factor. This is summarized in the
following lemma.

\begin{lemma} \label{lem:scaleup}
For every graph $G$ with $n$ vertices there is a spine-heavy graph $H$ that
 $O(\sff \log^2 n)$-approximates $G$. The graph $H$ can be constructed in time dominated by the
computation of a low-stretch tree for $G$.
\end{lemma}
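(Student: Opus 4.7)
The plan is to follow the recipe described immediately above the lemma. First, I would compute a low-stretch spanning tree $T$ of $G$ via the algorithm of \cite{AbrahamN12}, which runs in $O(\tff m \log n)$ time and delivers $\st_T(G) = O(\sff m \log n)$. Then I would form $H$ by keeping every edge of $G$ but multiplying the weights of the edges of $T$ by a single scalar $\alpha = \Theta(\sff \log^2 n)$. The construction beyond the tree computation is trivial: only $n-1$ weights are being rescaled, at $O(n)$ cost, so the overall running time is dominated by the low-stretch tree step, as the lemma demands.

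To verify the approximation factor, I would write $L_H = L_G + (\alpha - 1) L_T$, where $L_T$ is the Laplacian of $T$ with its original (unscaled) weights. Since $L_T \succeq 0$, we immediately get $L_G \preceq L_H$. For the upper direction, $T$ is a weighted subgraph of $G$, so $L_T \preceq L_G$, hence $L_H \preceq L_G + (\alpha-1) L_G = \alpha L_G$. This gives the $O(\sff \log^2 n)$-approximation asserted by the lemma.

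For the spine-heavy property I would examine $\st_{T'}(H)$, where $T'$ denotes $T$ viewed inside $H$ (with the heavier weights). Each of the $n-1$ tree edges contributes stretch exactly $1$ over itself. An off-tree edge $e$ has the same weight in $H$ as in $G$, but the unique $T'$-path joining its endpoints consists of edges whose weights have been inflated by $\alpha$, so the stretch of $e$ in $H$ over $T'$ is a factor of $\alpha^{-1}$ smaller than its stretch in $G$ over $T$. Summing over all edges gives
\[
\st_{T'}(H) \;\le\; (n-1) + \alpha^{-1}\,\st_T(G) \;=\; (n-1) + O\!\left(\frac{m}{\log n}\right),
\]
which is $O(m/\log n)$ whenever $m = \Omega(n\log n)$, matching the spine-heavy definition.

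There is no serious technical obstacle; the whole argument is a bookkeeping exercise, and the one thing worth flagging is the tension that the single parameter $\alpha$ must resolve. Increasing $\alpha$ degrades the approximation factor (which scales linearly in $\alpha$) but improves the off-tree stretch (which scales inversely in $\alpha$). The choice $\alpha = \Theta(\sff \log^2 n)$ is precisely what makes both quantities hit their target at once, and it is forced by the Abraham--Neiman stretch guarantee.
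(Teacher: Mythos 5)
Your proposal is correct and follows the paper's intended construction exactly: compute a low-stretch tree, scale its edge weights up by $\alpha = \Theta(\sff \log^2 n)$, and check that the two requirements (spectral approximation and spine-heaviness) are in tension but both met at this choice of $\alpha$. The caveat you flag about the additive $(n-1)$ term requiring $m = \Omega(n\log n)$ is real but is glossed over identically in the paper's terse statement; it causes no trouble downstream because the tree edges are handled for free by the spine solver.
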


Finally we state a lemma that summarizes the recent
work on fast SDD solvers \cite{kmp11}.

\begin{lemma} \label{lem:solver}
Let $A$ be an SDD matrix. There is a symmetric operator $\tilde{A}_\delta$
such that
$$
    (1-\delta) A \preceq \tilde{A}_\delta \preceq (1+\delta)A
$$
and that for any vector $b$,
the vector $\tilde{A}^+_\delta b$ can be evaluated
in $O(\tff m \log n + {\sff}    m  \log n \log(1/\delta))$ time. Moreover,
if $A$ is the Laplacian of a spine-heavy graph and its low-stretch
tree is given, then $\tilde{A}^+_\delta b$ can be evaluated in
$O(m \log(1/\delta))$ time.
\end{lemma}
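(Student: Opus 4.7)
The lemma repackages the solver of \cite{kmp11}, so my plan is to invoke that framework, expose the resulting symmetric operator, and track the $\sff$ and $\tff$ factors. I would first reduce to the Laplacian case via Gremban's standard transformation, which converts any SDD system into a Laplacian system of size at most $2n$ with a constant-factor blowup in nonzeros and a preserved spectral approximation relation; hence it is enough to construct $\tilde{A}_\delta$ when $A = L_G$ is a graph Laplacian.

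For the general case I would run the KMP preconditioned Chebyshev scheme. The preprocessing computes a low-stretch spanning tree $T$ via the algorithm of \cite{AbrahamN12}, costing $O(\tff m \log n)$ time and yielding $\st_T(G) = O(\sff m \log n)$. From $T$ we build a recursive hierarchy of preconditioners: at each level we scale up the tree and incrementally sparsify the off-tree edges with probabilities proportional to their stretch, producing a chain of graphs whose sizes decrease geometrically. The per-level condition number is controlled by the stretch bound, giving $O(\sff^{1/2} \log n)$ inner Chebyshev iterations per precision-doubling, and the work telescopes across the $O(\log n)$ recursion levels to $O(\sff m \log n \log(1/\delta))$ per application. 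For the spine-heavy case the tree is already given, eliminating the $O(\tff m \log n)$ setup; the hypothesis $\st_T(G) = O(m/\log n)$ lets us terminate the hierarchy after one sampling level that retains only $O(n)$ off-tree edges, so each outer Chebyshev iteration costs $O(m)$ and $O(\log(1/\delta))$ of them suffice.

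The main technical point is to realize $\tilde{A}_\delta$ as a genuinely symmetric operator rather than merely an algorithmic procedure that approximates $A^+ b$. Writing $B$ for the top-level preconditioner, I would work with the symmetric preconditioned matrix $M = (B^+)^{1/2} A (B^+)^{1/2}$ on the image of $A$ and fix a Chebyshev polynomial $p_k$ that uniformly $\delta$-approximates $1/x$ on its spectrum. Initializing the iteration at zero makes the output of $k$ steps exactly $(B^+)^{1/2} p_k(M) (B^+)^{1/2} b$; defining $\tilde{A}_\delta^+$ to be this symmetric linear map gives $\frac{1}{1+\delta} A^+ \preceq \tilde{A}_\delta^+ \preceq \frac{1}{1-\delta} A^+$, and inverting on the image of $A$ yields the stated bound on $\tilde{A}_\delta$. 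Because each level's preconditioner solve is realized by the same construction, induction on depth propagates symmetry throughout. The hardest step is composing the $\delta$-approximations across the recursion without inflating the final precision; the standard remedy is to solve each inner preconditioner to a fixed constant precision so that the product of $(1\pm O(1))$ factors remains bounded and can be tightened to $1\pm\delta$ by choosing the outer Chebyshev iteration count appropriately.
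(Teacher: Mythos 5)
The paper does not prove this lemma at all: the surrounding text introduces it with ``Finally we state a lemma that summarizes the recent work on fast SDD solvers \cite{kmp11}'' and treats the solver as a black box. Your proposal is therefore best read against \cite{kmplinsolve,kmp11} themselves. At the level of the general bound, your outline is faithful: the Gremban reduction from SDD to Laplacian, the low-stretch tree, the incremental-sparsifier preconditioning chain, and the observation that a fixed-degree preconditioned Chebyshev polynomial yields an honest symmetric operator $\tilde A_\delta^+$ rather than merely an approximate solve, with symmetry propagated down the recursion by induction, are exactly the ingredients of \cite{kmp11}.

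Your treatment of the spine-heavy case, however, does not work as stated. You claim that $\st_T(G)=O(m/\log n)$ lets you take a single incremental-sparsification step retaining only $O(n)$ off-tree edges, and then run $O(\log(1/\delta))$ outer Chebyshev iterations of cost $O(m)$ each. But by Theorem~\ref{thm:incrsparsifier}, to push the off-tree count down to $O(n)$ in one step you must choose $\kappa=\Theta(\st_T(G)\log n/n)=\Theta(m/n)$, so the resulting preconditioner is only an $O(m/n)$-approximation of $G$. Chebyshev then needs $\Theta(\sqrt{m/n})$ iterations per constant factor of error, not $O(1)$, and each iteration still costs $\Omega(m)$, so your total is $\Omega(m\sqrt{m/n}\,\log(1/\delta))$, far from the claimed $O(m\log(1/\delta))$. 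The actual spine-heavy analysis in \cite{kmp11} keeps $\kappa$ polylogarithmic, so the per-level iteration count is only $O(\log n)$, and exploits the fact that $\st_T(G)=O(m/\log n)$ makes the off-tree-edge count drop geometrically from the very first level of the W-cycle, so that the total work telescopes to $O(m)$ per outer precision-halving. Since the paper only imports this as a cited guarantee, you should either attribute the recursion bookkeeping to \cite{kmp11} outright or reproduce that multi-level geometric-series argument rather than collapsing the hierarchy to one level.
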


\subsection{Sampling for sparsification}

In a remarkable work, Spielman and Srivastava \cite{effres} analyzed a spectral sparsification algorithm based on a simple sampling procedure. The procedure will be central in our algorithms and we review it here.  It takes as input a weighted graph $G$ and frequencies $p'_e$ for each edge $e$. These frequencies are normalized to probabilities $p_e$ summing to $1$.  It then picks in $q$ rounds exactly $q$ {\em samples}, which are weighted copies of the edges.  The probability that given edge $e$ is picked in a given round is $p_e$. The weight of the corresponding sample is set so that the expected weight of the edge $e$ after sampling is equal to its actual weight in the input graph. The details are given in the following pseudocode.

\begin{algo}[h]
\qquad

\textsc{Sample}
\vspace{0.05cm}

\underline{Input:} Graph $G=(V,E,w)$, $p':E \rightarrow {\mathbb R}^+$.

\underline{Output:} Graph $G'=(V,{\cal L}, w')$.
\vspace{0.2cm}

\begin{algorithmic}[1]
\STATE{$t:= \sum_e p'_e$}
\STATE{$q:= C_s  t \log{t} /\epsilon^2$} {\footnotesize ~~(* $C_S$ is an explicitly known constant  *)}
\STATE{$p_e:= {p'_e}/{t}$}
\STATE{$G':= (V,{\cal L},w')$ with ${\cal L}=\emptyset$}
\FOR{$q$ \mbox{times}}
\STATE{{\small Sample one $e \in E$ with probability of picking $e$ being $p_e$} }
\STATE{{\small Add $l$, a sample of $e$,
to ${\cal L}$ with weight $w'_l = w_e/p_e $}}
\ENDFOR
\STATE{For all $l\in {\cal L}$, let $w'_{l}: = w'_l/q$}
\RETURN{$G'$}
\end{algorithmic}
\end{algo}

Spielman and Srivastava analyzed the case when $p'_e = w_eR^G(e)$, where $R^G(e)$ is the effective resistance
of $e$ in $G$. The following generalization characterizes the quality of $G'$ as a spectral sparsifier for $G$. It is shown in \cite{kl} and it was originally proved with a weaker success guarantee in \cite{kmplinsolve}.

\begin{theorem}\label{th:oversampling}
\textbf{(Oversampling)} Let $G=(V,E,w)$ be a graph. Assuming that $p'_e \geq w_eR^G(e)$ for each edge $e\in E$ the graph $G' = \textsc{Sample}(G, p')$ is a $(1\pm \epsilon)$ sparsifier of $G$ with probability at least $1-1/n^2$.
\end{theorem}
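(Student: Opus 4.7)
The plan is to reduce the spectral inequality to a matrix concentration statement about a sum of independent random rank-one matrices and then apply a matrix Chernoff (Ahlswede–Winter / Rudelson–Vershynin style) inequality, following the template of Spielman–Srivastava.

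Let me denote the $k$-th sampled edge by $e_k$, so that
\[
  L_{G'} \;=\; \frac{1}{q}\sum_{k=1}^{q} \frac{w_{e_k}}{p_{e_k}}\, b_{e_k} b_{e_k}^T.
\]
Since edge $e$ is picked with probability $p_e$ in each round, each summand has expectation
\[
  \mathbb{E}\!\left[\frac{w_{e_k}}{p_{e_k}} b_{e_k} b_{e_k}^T\right] \;=\; \sum_{e\in E} p_e \cdot \frac{w_e}{p_e} b_e b_e^T \;=\; L_G,
\]
so $\mathbb{E}[L_{G'}] = L_G$. To turn the desired relation $(1-\epsilon)L_G \preceq L_{G'} \preceq (1+\epsilon)L_G$ into a matrix-norm inequality, I would conjugate by $L_G^{+/2}$ and work in the $(n-1)$-dimensional image of $L_G$, so the claim becomes
\[
  \bigl\| \Pi - \tfrac{1}{q}\textstyle\sum_{k=1}^q Y_k \bigr\| \;\le\; \epsilon,
  \qquad Y_k := \frac{w_{e_k}}{p_{e_k}}\, L_G^{+/2} b_{e_k} b_{e_k}^T L_G^{+/2},
\]
where $\Pi = L_G^{+/2} L_G L_G^{+/2}$ is the orthogonal projection onto $(\ker L_G)^\perp$.

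The key computation is the norm bound on each $Y_k$: since $Y_k$ is rank-one and PSD,
\[
  \|Y_k\| \;=\; \frac{w_{e_k}}{p_{e_k}}\, b_{e_k}^T L_G^+ b_{e_k} \;=\; \frac{w_{e_k}}{p_{e_k}}\, R^G(e_k),
\]
using the identity \eqref{eq:efres}. By hypothesis $p'_{e_k} \ge w_{e_k} R^G(e_k)$, and the algorithm sets $p_{e_k}=p'_{e_k}/t$, so $\|Y_k\| \le t$ deterministically. Combined with $0 \preceq Y_k$ and $\mathbb{E}[Y_k] = \Pi$, a scalar/matrix Chernoff bound (e.g.\ Ahlswede–Winter) yields
\[
  \Pr\!\left[\,\bigl\|\tfrac{1}{q}\textstyle\sum_{k=1}^q Y_k - \Pi\bigr\| > \epsilon\,\right]
  \;\le\; 2n\,\exp\!\left(-\,\frac{c\,\epsilon^2 q}{t}\right)
\]
for an absolute constant $c$. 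Plugging in $q = C_s\,t\log t/\epsilon^2$ makes the exponent $-cC_s\log t$, and since $t \ge \sum_e w_e R^G(e) = n-1$ (the standard foster-type identity), choosing $C_s$ large enough makes the failure probability at most $1/n^2$, as required.

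The delicate step — and the one I would write most carefully — is the reduction to the norm inequality on the image of $L_G$: one has to verify that $(1-\epsilon)L_G \preceq L_{G'} \preceq (1+\epsilon)L_G$ is indeed equivalent to the bound on $\|\Pi - q^{-1}\sum Y_k\|$, which relies on the fact that every $b_e$ lies in the image of $L_G$ (so every $Y_k$ is supported on that subspace) together with $L_G^{+/2} L_G L_G^{+/2} = \Pi$. Everything else — the expectation computation, the deterministic norm bound from the oversampling hypothesis, and the lower bound $t\ge n-1$ that makes $\log t = \Omega(\log n)$ — is straightforward once the resistance identity and the projection reformulation are in hand.
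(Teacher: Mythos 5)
Your proof is correct, and it is exactly the argument the paper implicitly relies on: the paper does not prove Theorem~\ref{th:oversampling} itself but cites Kelner--Levin~\cite{kl} (and the weaker predecessor in~\cite{kmplinsolve}), and the Kelner--Levin proof is precisely the Spielman--Srivastava reduction you describe --- conjugate by $L_G^{+/2}$ to reduce to a norm bound on $\frac{1}{q}\sum_k Y_k - \Pi$ over the image of $L_G$, use the oversampling hypothesis $p'_e \geq w_e R^G(e)$ to get the deterministic bound $\|Y_k\| \leq t$, and invoke an Ahlswede--Winter/Tropp matrix Chernoff bound; the improvement over~\cite{kmplinsolve} from constant to $1-1/n^2$ success probability comes exactly from using the exponential tail rather than Rudelson's expectation bound. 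The one place to be careful, which you flag, is that the lower-tail Chernoff bound must be applied on the $(n-1)$-dimensional image of $L_G$, where $\Pi$ is the identity with $\mu_{\min}=1$; stated naively in $\mathbb{R}^n$ the bound is vacuous since $\lambda_{\min}(\Pi)=0$.
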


\subsection{Incremental spectral sparsifiers}
\label{sec:incrspars}
In \cite{kmplinsolve}, Koutis, Miller, and Peng
asked whether   they   could get anything useful out of the
Spielman-Srivastava construction,
without having to use effective resistances whose computation
requires the solution of  linear system solvers. The oversampling
Theorem \ref{th:oversampling} offers a possibility as
long as we are able to compute, more efficiently,
upper bounds to the quantities $w_e R^G(e)$.

The idea
in \cite{kmplinsolve} was to use spanning trees and
use as an upper bound the stretch of an edge over the tree.
The idea generalizes readily to spanning subgraphs. The reason is that
for any subgraph $S$, we have $\st_S(e) \geq w_e R^{G}(e)$ by
Rayleigh's theorem.

In the case $S$ is a tree, we can
compute the stretches
of all the edges  in $O(m)$ time (using an offline lowest
common ancestor algorithm  \cite{Tarjan79, GabowT83}).
For one of our results we will also take $S$ to be
a so-called $O(\log n)$-spanner of the graph. In this case, by definition
of spanner, we have $\st_S(e) = O(\log n)$ for all $e$;
we will use directly these estimates without further computations.

In order to  get a useful approximation via oversampling, we need the
number of edges in the resulting object to be
significantly smaller
than $m.$  The intuition is that the lower we get $\st_S(G)$ the
fewer samples we need to take, since  the sum of the
 overestimates of the probabilities
defines the number of samples. In \cite{kmplinsolve}
this intuition is coupled with a scaling-up technique,
where instead of $G$ we apply oversampling on
a graph $H$  that
is the same as $G$  except that the weights of the edges of $S$
are scaled up by a suitably chosen factor $\kappa.$
By doing that, we lose
a factor of  $\kappa$ in the  approximation guarantee but
at the same time we in fact lower the total stretch
by a $\kappa$ factor, and, for suitable
$\kappa,$ we will have
have  a small enough
edge count in the graph $I$ we output.
We call this graph the
\emph{incremental spectral sparsifier}.

We summarize  the result in Theorem \ref{thm:incrsparsifier} below, which is
an adaptation of a corresponding Theorem in \cite{kmp11}, coupled
with the stronger probabilistic guarantee of success of Theorem \ref{th:oversampling},
and slightly generalized to handle general subgraphs rather than only spanning trees.

\begin{theorem}
\label{thm:incrsparsifier}
Let $G$ be a graph and $S$ be a spanning subgraph of $G$.
Assume that $\st_S(e)$ is known for each edge $e$ of $G$.
Then, there is an algorithm for constructing a graph
$I$ such that, given $\kappa$:
\begin{itemize}
\item
$H\preceq I\preceq 2 H$, where $H=G+\kappa S$.
\item
$I$ has  $n- 1+ O( \st_S(G) \log n /\kappa  )$ edges
\end{itemize}
assuming $\st_S(G)$ is polynomially bounded.
The algorithm succeeds with high  probability and runs in in  $O(m + \st_S(G) \log n /\kappa)$ time.
\end{theorem}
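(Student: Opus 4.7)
The plan is to invoke the oversampling Theorem~\ref{th:oversampling} on the multigraph $H = G + \kappa S$, using the scaled subgraph $\kappa S$ in two roles: as a Rayleigh certificate for upper bounds on effective resistances in $H$, and as a ``backbone'' whose edges are included deterministically in $I$. First bound the resistances: since $\kappa S \preceq H$, Rayleigh monotonicity yields $R^H(e) \le R^{\kappa S}(e)$ for every edge $e$. For the $G$-copy of $e=(i,j)$ with weight $w_e$, the $S$-path from $i$ to $j$ upper-bounds $R^{\kappa S}(e)$ by $\st_S(e)/(\kappa w_e)$, so setting $p'_e := \st_S(e)/\kappa$ gives the required $p'_e \ge w_e R^H(e)$. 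For the $\kappa S$-copy of $e$ with weight $\kappa w_e$, the single-edge path gives $R^H(e) \le 1/(\kappa w_e)$, so $p'_e := 1$ suffices.

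Next apply Theorem~\ref{th:oversampling} to $H$ with these $p'_e$ at a constant error $\epsilon = 1/3$. The total probability mass is
\begin{equation*}
  t = \sum_e p'_e = |S| + \st_S(G)/\kappa = O(\st_S(G)/\kappa),
\end{equation*}
using $\st_S(G) \ge |S|$ in the spanning-tree case (each tree edge contributes stretch exactly $1$ via the trivial path). The theorem therefore returns, with probability $\ge 1 - 1/n^2$, a multigraph $\tilde H$ with $q = O(t\log t/\epsilon^2) = O(\st_S(G)\log n/\kappa)$ sampled parallel copies satisfying $(1-\epsilon)H \preceq \tilde H \preceq (1+\epsilon)H$.

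To reach the advertised edge count $n-1 + O(\st_S(G)\log n/\kappa)$, collapse parallel copies of each physical edge into a single weighted edge; this preserves the Laplacian exactly and hence all spectral guarantees. The backbone contributes at most $|S|$ distinct edges (equal to $n-1$ when $S$ is a spanning tree), and the $G$-side samples contribute at most $q$ further distinct edges. Moreover, every backbone edge has $p'_e = 1$ and would be resampled $\Theta(\log n)$ times, so it is both cleaner and faster to include $\kappa S$ deterministically from the outset and draw only the $q$ samples from the $G$-side; this variant follows from a routine modification of the matrix-Chernoff argument behind Theorem~\ref{th:oversampling}, since the deterministic part introduces no deviation and the random part still satisfies the same leverage-score bounds. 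Rescaling $\tilde H$ by $1/(1-\epsilon)$ then yields the promised $H \preceq I \preceq 2H$.

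For the running time, computing all $p'_e$ from the precomputed $\st_S(e)$ and their normalization takes $O(m)$ time; the $q$ samples can be drawn with $O(m)$ preprocessing for an alias table and $O(1)$ per draw; collapsing parallel copies costs another $O(q)$. This totals $O(m + \st_S(G)\log n/\kappa)$. The main obstacle is the bookkeeping at the transition from the black-box statement of Theorem~\ref{th:oversampling} to the ``sampled-part only'' version used here: once one verifies that deterministically retaining $\kappa S$ leaves the matrix-concentration argument intact (so that the only random contribution is the $G$-side), the sparsity and runtime accounting are straightforward.
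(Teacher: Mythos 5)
Your construction --- scale $S$ up by $\kappa$, treat $\kappa S$ as a deterministic backbone, and sample the $G$-edges with frequencies $p'_e = \st_S(e)/\kappa$ justified by Rayleigh monotonicity and the oversampling theorem --- is exactly the scaling-up construction from \cite{kmplinsolve,kmp11} that the paper points to when it states Theorem~\ref{thm:incrsparsifier} without proof, so the overall route is the intended one and the resistance bounds $w_e R^H(e) \le \st_S(e)/\kappa$ are argued correctly.

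Two points need tightening. First, the step $t = |S| + \st_S(G)/\kappa = O(\st_S(G)/\kappa)$ is not justified by $\st_S(G) \ge |S|$: when $\kappa > 1$ (which is the interesting regime, since $\kappa$ is typically polylogarithmic) one only gets $\st_S(G)/\kappa \ge |S|/\kappa$, and $t$ can well be dominated by $|S|$. This matters because a black-box call to Theorem~\ref{th:oversampling} then draws $q = \Theta(t\log t) = \Theta(|S|\log n)$ samples, which is $\Theta(n\log n)$ for a tree and is \emph{not} bounded by $O(m + \st_S(G)\log n/\kappa)$ when, say, $m = O(n)$ and $\kappa$ is large. Consequently your ``cleaner and faster'' variant --- retain $\kappa S$ deterministically and run the sampling only over the $G$-side with mass $t' = \st_S(G)/\kappa$ and $q' = O(t'\log n/\epsilon^2)$ draws --- is not an optional optimization but the version actually needed to meet both the advertised edge count $n-1 + O(\st_S(G)\log n/\kappa)$ and the running-time bound. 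Second, the ``routine modification of the matrix-Chernoff argument'' that licenses this split should be stated and checked rather than invoked: after normalizing by $L_H^{+/2}$ the fixed part $L_H^{+/2}L_{\kappa S}L_H^{+/2}$ contributes no variance, each drawn term has spectral norm at most $p'_{e}/(p_e q') = t'/q'$, and the expectations sum to the identity, so matrix Chernoff with $q' = O(t'\log n/\epsilon^2)$ gives the $(1\pm\epsilon)$ guarantee. This is indeed standard and is precisely how \cite{kmplinsolve} argues, but in a self-contained proof it is the actual content of the theorem and should not be left as a remark.
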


\section{The general case: An $O(\sff m\log^2 n)$ time algorithm}  \label{sec:sparsification}
\subsection{Estimating effective resistances}

As we discussed above, Spielman and Srivastava \cite{effres}
use the \textsc{Sample} algorithm with $p_e' = w_eR^G(e)$.
For the efficient implementation of their algorithm
they first obtain a different expression for the effective
resistance, via a simple algebraic manipulation:
\begin{eqnarray*}
R^G(i,j) &=& (\chi_i - \chi_j)^T L^+ (\chi_i - \chi_j)\\
&=& (\chi_i - \chi_j)^T L^+ L L^+ (\chi_i - \chi_j)\\
&=&(\chi_i - \chi_j)^T L^+ B^T W^{1/2} W^{1/2} B L^+ (\chi_i - \chi_j)\\
&=& \|W^{1/2}B L^+(\chi_i-\chi_j)\|^2
\end{eqnarray*}
The advantage of this definition is that
it expresses
the effective resistance as the squared Euclidean
distance between  two points, given by the $i$th and $j$th column
of the matrix $W^{1/2}B L^+$.
This new expression still  involves the solution of a linear system
with $L$. The natural idea is to replace $L$ with
an approximation $\tilde{L}_\delta$ satisfying  the properties
described in Lemma \ref{lem:solver}. So instead of
$R^G(i,j)$ we compute the quantities
$\hat{R}^G(i,j) = \|W^{1/2}B \tilde{L}^+_\delta(\chi_i-\chi_j)\|^2$.

Of course, there are still $m$ systems to be solved. To work around
this hurdle, Spielman and Srivastava observe that projecting the
vectors to an $O(\log n)$-dimensional space  preserves the
Euclidean distances within a factor of $1\pm \epsilon/8$, by the Johnson-
Lindenstrauss theorem. Algebraically
this amounts to computing the quantities  $\|Q W^{1/2}B \tilde{L}^+_\delta
(\chi_i-\chi_j)\|^2$, where $Q$ is a properly defined random matrix
of dimension $k\times m$ for $k = O(\log n)$. The authors invoke the result
of Achlioptas \cite{jl}, which states that  one can use
a matrix $Q$ each of whose entries is randomly chosen
in $\{\pm1/\sqrt{k}\}.$

The construction of the   sparsifiers can  can thus   be broken
up into three  steps.
\begin{enumerate}
\item
Compute $Q W^{1/2}B$. This takes time $O(km),$ since $B$ has only two
non-zero entries per row.
\item
Apply the linear operator $\tilde{L}^+_\delta$ to the $k$ columns of the
matrix $(Q W^{1/2}B)^T$, using Lemma \ref{lem:solver}. This gives
the matrix $Z=Q W^{1/2}B \tilde{L}^+_\delta$.
\item
Compute all the (approximate) effective resistances (time $O(k m)$) via the square norm of
the differences between columns of the matrix $Z$. Then sample the edges.
\end{enumerate}

\subsection{The $O(\sff m\log^2 n)$ time algorithm}
Spielman and Srivastava prove that the
approximations $\hat{R}^G(i,j)$ can be used to obtain the sparsifier
if they satisfy
$$
   (1-\epsilon/4) {R}^G(i,j) \leq  \hat{R}^G(i,j) \leq (1+\epsilon/4){R}^G(i,j).
$$
Then they show that this can be satisfied if
$\delta,$
the  accuracy guarantee of
the  linear system solver,  is taken to be
an {\em inverse polynomial} in $n$. Thus their algorithm
is dominated by the second step (the applications of $\tilde{L}^+_\delta$)
and  takes time $O(\tff m\log n + \sff m   \log^3 n  \log(1/\epsilon) )$.

The following lemma shows that in fact it is enough to
take $\delta$ to be a constant.  Furthermore,
our   proof   significantly simplifies
 the corresponding analysis   of \cite{effres}.
\begin{lemma}
\label{lem:solvererror}
For a given $\epsilon$, if $\tilde{L}$ satisfies $(1-\delta)L \preceq \tilde{L} \preceq (1+\delta)L$
where $\delta = \epsilon / 8 $,
then the approximate effective resistance values $\hat R^G(u,v) = \|W^{1/2}B\tilde{L}^+(\chi_u - \chi_v)\|^2$
satisfy:
\begin{align*}
    (1 - \epsilon) R^G(u,v) \leq \hat{R}^G(u,v) \leq (1 + \epsilon) R^G(u,v).
\end{align*}
\end{lemma}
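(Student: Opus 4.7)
The plan is to expand $\hat{R}^G(u,v)$ into a quadratic form on the pseudoinverse and then apply the hypothesis $(1-\delta)L \preceq \tilde{L} \preceq (1+\delta)L$ twice: once to replace $L$ by $\tilde{L}$, and once to pass from $\tilde{L}^+$ back to $L^+$. All the vectors $\chi_u - \chi_v$ we care about lie in $(\ker L)^\perp = (\ker \tilde L)^\perp$, so working with pseudoinverses is unproblematic.

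First, using the identity $L = B^T W B$, I would rewrite
$$
\hat{R}^G(u,v) = \|W^{1/2} B \tilde{L}^+ (\chi_u - \chi_v)\|^2 = (\chi_u - \chi_v)^T \tilde{L}^+ L \tilde{L}^+ (\chi_u - \chi_v).
$$
Now let $x = \chi_u - \chi_v$ and set $y = \tilde{L}^+ x$, so $\hat{R}^G(u,v) = y^T L y$. The hypothesis gives $(1-\delta)\, y^T L y \leq y^T \tilde L y \leq (1+\delta)\, y^T L y$. Since $y$ lies in the image of $\tilde L^+$ we have $\tilde L^+ \tilde L \tilde L^+ = \tilde L^+$ on this subspace, so $y^T \tilde L y = x^T \tilde{L}^+ x$. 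This yields the sandwich
$$
\frac{1}{1+\delta}\, x^T \tilde{L}^+ x \;\leq\; \hat{R}^G(u,v) \;\leq\; \frac{1}{1-\delta}\, x^T \tilde{L}^+ x.
$$

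Second, I would use the standard consequence of the hypothesis that, because $L$ and $\tilde L$ have identical null spaces,
$$
\frac{1}{1+\delta} L^+ \;\preceq\; \tilde{L}^+ \;\preceq\; \frac{1}{1-\delta} L^+
$$
(this is precisely the direction used in \eqref{eq:inverse}). Substituting and recalling $x^T L^+ x = R^G(u,v)$ gives
$$
\frac{1}{(1+\delta)^2}\, R^G(u,v) \;\leq\; \hat{R}^G(u,v) \;\leq\; \frac{1}{(1-\delta)^2}\, R^G(u,v).
$$

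Finally, I plug in $\delta = \epsilon/8$ and verify elementarily that $(1+\delta)^{-2} \geq 1-\epsilon$ and $(1-\delta)^{-2} \leq 1+\epsilon$ in the relevant range of $\epsilon$ (say $\epsilon \leq 1/2$), completing the proof. There is no real obstacle here; the only subtlety is keeping the pseudoinverse manipulations honest by noting that every vector in sight lies in $(\ker L)^\perp$, which is the same as $(\ker \tilde L)^\perp$. The whole argument is cleaner than the $\delta = 1/\operatorname{poly}(n)$ route of \cite{effres} because the single chain of inequalities above shows directly that constant $\delta$ suffices.
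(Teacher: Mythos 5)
Your proof is correct and follows essentially the same approach as the paper: both arguments expand $\hat{R}^G(u,v) = x^T\tilde{L}^+ L\tilde{L}^+ x$ and apply the two Loewner orderings, $(1-\delta)L \preceq \tilde{L}\preceq (1+\delta)L$ and the derived $\frac{1}{1+\delta}L^+ \preceq \tilde{L}^+ \preceq \frac{1}{1-\delta}L^+$, once each. The only difference is organizational: you prove both halves of the sandwich symmetrically in one pass via the identity $x^T\tilde L^+x = (\tilde L^+x)^T\tilde L(\tilde L^+x)$, whereas the paper starts from $R = x^TL^+x$, proves the lower bound on $\hat R$, and says the upper bound is analogous. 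Both proofs are essentially identical in content.
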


\begin{proof}
We only show the first half of the inequality, as the other half follows similarly.
Since $L$ and $\tilde{L}$ have the same null space, by
 \eqref{eq:inverse}  the given condition is equivalent to:
\begin{align*}
\frac{1}{1+\delta}L^+ \preceq \tilde{L}^+ \preceq \frac{1}{1-\delta}L.
\end{align*}
Since $\frac{1}{1+\delta}L^+ \preceq \tilde{L}^{+}$, we have
\begin{align*}
R^G(u,v) & = (\chi_u - \chi_v)^TL^+(\chi_u - \chi_v) \\
& \leq (1+\delta) (\chi_u - \chi_v)^T\tilde{L}^+(\chi_u - \chi_v) \\
& = (1+\delta) (\chi_u - \chi_v)^T\tilde{L}^+ \tilde{L} \tilde{L}^+ (\chi_u - \chi_v).
\end{align*}

Applying the fact that $\tilde{L}  \preceq (1+\delta)L$
 to the vector
$\tilde{L}^+ (\chi_u - \chi_v)$ in turn gives:
\begin{align*}
R^G(u,v) & \leq (1+\delta)^2 (\chi_u - \chi_v)^T\tilde{L}^+ L \tilde{L}^+ (\chi_u - \chi_v) \\
& = (1+\delta)^2 \|W^{1/2} B \tilde{L}^+ (\chi_u - \chi_v)\|^2 = (1+\delta)^2 \hat{R}^G(u,v)
\end{align*}

The rest of the proof follows from $\frac{1}{(1+\delta)^2} \leq 1-\epsilon/4$ by choice of $\delta$.
\end{proof}

This proves our first theorem.
\begin{theorem}
 There is a $1\pm \epsilon$ sparsification algorithm that runs in $O(\tff m\log n + \sff m\log^2 n \log(1/\epsilon))$ time.
\end{theorem}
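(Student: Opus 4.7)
The plan is to assemble the three-step Spielman--Srivastava framework reviewed just above, substituting the fast SDD solver of Lemma~\ref{lem:solver} and exploiting Lemma~\ref{lem:solvererror} so that a constant-precision solve ($\delta = \Theta(\epsilon)$) is used in place of the inverse-polynomial precision required by the original Spielman--Srivastava analysis; the output is then fed into \textsc{Sample} and Theorem~\ref{th:oversampling} is invoked to conclude.

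More concretely, I would draw a Johnson--Lindenstrauss matrix $Q$ of dimension $k \times m$ with $k = O(\log n)$, using Achlioptas' $\pm 1/\sqrt{k}$ construction, so that all $m$ edge-endpoint pairwise squared distances are preserved within a fixed constant factor with high probability after a union bound. I would then set $\delta = \epsilon/8$ and execute: (1) form $QW^{1/2}B$; (2) apply $\tilde L^+_\delta$ from Lemma~\ref{lem:solver} column-by-column to $(QW^{1/2}B)^T$, producing $Z = QW^{1/2}B\tilde L^+_\delta$; (3) for each edge $e=(u,v)$ set $\hat R^G(e) = \|Z(\chi_u - \chi_v)\|^2$ and pass the probabilities $p'_e = C w_e \hat R^G(e)$ (for a suitable constant $C$ absorbing the JL distortion) to \textsc{Sample}.

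The correctness argument decomposes cleanly into two pieces. Lemma~\ref{lem:solvererror} guarantees that the unprojected values $\|W^{1/2}B\tilde L^+_\delta(\chi_u-\chi_v)\|^2$ lie within $(1\pm \epsilon)$ of $R^G(u,v)$, while the JL property of $Q$ alters this by at most a constant multiplicative factor. Hence, with the constant $C$ chosen appropriately, $p'_e \geq w_e R^G(e)$ holds simultaneously for every edge with high probability. Theorem~\ref{th:oversampling} then yields a $(1\pm\epsilon)$ spectral sparsifier with $O(n\log n/\epsilon^2)$ samples, because $\sum_e w_e R^G(e) = n-1$ keeps the sum $t$ of the $p'_e$ at $O(n)$.

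For the running time, steps (1) and (3) are straightforward: $O(km) = O(m\log n)$ since $B$ has only two nonzero entries per row and only one squared norm is evaluated per edge. The bottleneck, as expected, is step (2): by Lemma~\ref{lem:solver} a single preprocessing phase dominated by a low-stretch tree computation costs $O(\tff m\log n)$, after which each of the $k = O(\log n)$ solves costs $O(\sff m\log n \log(1/\delta)) = O(\sff m\log n \log(1/\epsilon))$. Summing yields the claimed $O(\tff m\log n + \sff m\log^2 n \log(1/\epsilon))$ bound. The point that most needs care — and the only place where the improvement over the $O(\sff m \log^3 n)$ running time is actually earned — is the multiplicative composition of the JL and solver errors; but this is exactly what Lemma~\ref{lem:solvererror} is designed to handle, so no further work beyond bookkeeping the constants is required.
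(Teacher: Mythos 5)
Your proposal is correct and takes essentially the same approach as the paper: plug the fast SDD solver (Lemma~\ref{lem:solver}) into the three-step Spielman--Srivastava pipeline and invoke Lemma~\ref{lem:solvererror} to justify taking $\delta = \Theta(\epsilon)$ rather than inverse-polynomial precision, which is exactly where the logarithmic factor is saved. Your routing of the final concentration step through Theorem~\ref{th:oversampling} with a constant-distortion Johnson--Lindenstrauss projection (so that $k = O(\log n)$ and the $p'_e$ need only be one-sided constant-factor overestimates) is a slightly cleaner finish than the paper's appeal to the Spielman--Srivastava $1\pm\epsilon/4$ resistance-estimate criterion, but the two are interchangeable and yield identical sparsity and running-time bounds.
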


\section{The $O(\tff m\log n)$ time algorithm} \label{sec:improve}
\label{sec:mlognalg}

Informally, the oversampling Theorem \ref{th:oversampling} states that
if we use estimates to   the effective resistances,
rather than the true values,  the Spielman-Srivastava scheme still works; but
 in order to produce the sparsifier we have to compensate by taking
more samples. We exploit this in our second Theorem.

\begin{theorem} \label{th:first}
 There is a $(1\pm \epsilon)$-sparsification algorithm
 that runs in $O(\tff m\log n + m\log n \log(1/\epsilon) )$ time and returns a sparsifier
 with $O(\sff n\log^3 n/\epsilon^2)$ edges.
 As a result, we can compute an $(1\pm \epsilon)$-sparsifier with
 $O(n\log n/\epsilon^2)$ edges in
 $O(\tff m\log n+ m\log n \log(1/\epsilon) + \sff^2 n\log^5 n / \epsilon^2)$ time.
\end{theorem}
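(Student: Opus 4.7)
My plan is to run the Spielman--Srivastava sampling scheme, but on a \emph{spine-heavy} surrogate of $G$ where each linear solve costs only $O(m\log(1/\delta))$ time by Lemma~\ref{lem:solver}. First I would compute a low-stretch spanning tree $T$ of $G$ in $O(\tff m\log n)$ time with $\st_T(G)=O(\sff m\log n)$, and use Lemma~\ref{lem:scaleup} to form the spine-heavy graph $H$ obtained by scaling the tree edges by $\kappa=\Theta(\sff\log^2 n)$; by construction $G\preceq H\preceq\kappa G$, and the low-stretch tree of $H$ is $T$ itself (up to the scaling), so Lemma~\ref{lem:solver} gives a solver for $L_H$ running in $O(m\log(1/\delta))$ time per right-hand side.

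Next I would run the Spielman--Srivastava machinery on $H$: project with a Johnson--Lindenstrauss matrix $Q\in\mathbb{R}^{k\times m}$ with $k=O(\log n)$, form $QW^{1/2}B$ in $O(m\log n)$ time, and solve the $k$ systems $L_H\tilde{z}_i = (QW^{1/2}B)^T e_i$ to fixed accuracy $\delta=\Theta(\epsilon)$ using Lemma~\ref{lem:solver}. By Lemma~\ref{lem:solvererror} (applied to $L_H$), the resulting squared distances $\hat{R}^H(e)$ approximate $R^H(e)$ within a $(1\pm\epsilon/4)$ factor. The total cost of this block is $O(m\log n\log(1/\epsilon))$. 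I would then set
\[
    p'_e \;=\; \frac{\kappa\,w_e\,\hat{R}^H(e)}{1-\epsilon/4}.
\]
Combining $R^G(e)\le\kappa R^H(e)$ from \eqref{eq:approxres} with Lemma~\ref{lem:solvererror} shows $p'_e\ge w_e R^G(e)$, so the oversampling condition of Theorem~\ref{th:oversampling} holds. For the total budget, note that $w_e^G\le w_e^H$ for every edge, hence
\[
    t=\sum_e p'_e \;\le\; \frac{\kappa}{1-\epsilon/4}(1+\epsilon/4)\sum_{e\in H} w_e^H R^H(e) \;=\; O(\kappa\,n) \;=\; O(\sff n\log^2 n),
\]
and Theorem~\ref{th:oversampling} therefore produces a $(1\pm\epsilon)$-sparsifier $\tilde G'$ of $G$ with $q=O(t\log t/\epsilon^2)=O(\sff n\log^3 n/\epsilon^2)$ edges. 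Summing the costs of the low-stretch tree, the $O(\log n)$ solves, and the $m$ distance computations proves the first half of the theorem.

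For the second half I would invoke the $O(\sff m\log^2 n)$ algorithm of Theorem~\ref{th:oversampling}'s companion (Section~\ref{sec:sparsification}) with error $\epsilon/3$ on $\tilde G'$, whose edge count is $m'=O(\sff n\log^3 n/\epsilon^2)$, to obtain a $(1\pm\epsilon/3)$-sparsifier $\tilde G$ of $\tilde G'$ with $O(n\log n/\epsilon^2)$ edges in time $O(\tff m'\log n + \sff m'\log^2 n\log(1/\epsilon)) = O(\sff^2 n\log^5 n/\epsilon^2)$ (absorbing the $\log(1/\epsilon)$ factor). Transitivity of spectral approximation then certifies $\tilde G$ as a $(1\pm\epsilon)$-sparsifier of $G$ after rescaling $\epsilon$ by a constant, and the total running time is the claimed $O(\tff m\log n+m\log n\log(1/\epsilon)+\sff^2 n\log^5 n/\epsilon^2)$.

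The main obstacle, and the delicate point I expect to spend most care on, is justifying that the \emph{approximate} resistances $\hat R^H(e)$ computed on $H$ yield valid oversampling weights for $G$: this requires chaining three separate distortions (the $\kappa$ gap between $R^G$ and $R^H$, the JL projection error, and the solver error from Lemma~\ref{lem:solver}) and checking that their product still sums to $O(\sff n\log^2 n)$ so that the sample count stays at $O(\sff n\log^3 n/\epsilon^2)$. The bookkeeping of $(1-\epsilon/4)$-type constants on both sides is routine but must be handled so that $p'_e\ge w_e R^G(e)$ holds on the nose; everything else is a clean composition of pieces already set up in Sections~\ref{sec:background} and \ref{sec:sparsification}.
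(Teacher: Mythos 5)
Your proposal follows the same route as the paper's proof: construct the spine-heavy $O(\sff\log^2 n)$-approximation $H$ via Lemma~\ref{lem:scaleup}, run the Spielman--Srivastava estimation on $H$ with the $O(m\log(1/\delta))$-time solver and constant-accuracy guarantee of Lemmas~\ref{lem:solver}--\ref{lem:solvererror}, then feed $\kappa$-inflated estimates into the oversampling theorem and re-sparsify with the general-case algorithm. The only cosmetic difference is how you control the sampling budget: you bound $\sum_e w_e^G\hat R^H(e)$ via $w_e^G\le w_e^H$ and $\sum_e w_e^H R^H(e)=n-1$, whereas the paper first normalizes $\hat R^H\le R^G$ and uses $\sum_e w_e^G R^G(e)=n-1$; both give $O(\kappa n)$ and both chains of inequalities are valid, so this is the same argument with bookkeeping moved around.
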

\begin{proof}
Given the input graph $G$ we construct a spine-heavy graph $H$ that $O(\sff \log^2 n)$-approximates $G$.
The construction can be done in $O(\tff m\log n)$ time, by Lemma \ref{lem:scaleup}. We then run
the Spielman-Srivastava scheme (Section \ref{sec:sparsification}) on $H$ to approximate the effective resistances $R^H(i,j)$ within a factor
of $1\pm \epsilon$. Step 2
of the Spielman-Srivastava scheme runs in $O(m \log n \log(1/\epsilon))$ time on $H$, by Lemma \ref{lem:solver}.
We adjust the approximate effective resistances in $H$ down by a factor of $1+\epsilon$ to
accommodate for the upper side of the error in Lemma \ref{lem:solvererror}. Then, by
 \eqref{eq:1pmepsilon} the calculated approximate effective resistances satisfy
$$ \frac{1}{O(\sff \log^2 n)} R^G(i,j) \leq  \hat{R}^H (i,j) \leq
R^G(i,j).$$
So Theorem \ref{th:oversampling}
applies if we take $p_e'=O(\sff \log^2 n) w_e \hat{R}^H (i,j)$. We have
$$
  \sum_e p_e' = O(\sff \log^2 n) \sum_e w_e \hat{R}^H (i,j) \leq O(\sff \log^2 n) \sum_e w_e R^G(i,j) = O(\sff n \log^2 n).
$$
The last equality follows from the fact that $\sum_e w_e R^G(i,j)=n-1$ for any graph $G$ (e.g. see \cite{effres}).
Hence the total number of samples we need to take in order
to produce an $(1\pm \epsilon)$-sparsifier is  $O(\sff n\log^3 n/\epsilon^2)$.
The second sparsifier is computed by re-sparsifying with the general case algorithm (and appropriate settings for $\epsilon$).
\end{proof}


\section{The fastest algorithms} \label{sec:fastest}

\subsection{A near-linear stretch tree in $O(m)$ time}

The first problem in trying to accelerate the algorithm of the
previous Section lies in the computation of the low-stretch tree;
known algorithms for the task take time at least $O(m\log n)$.
To work around this problem we will trade-off stretch for time.
In the remainder of this subsection we show that we can in $O(m)$ time
produce a spanning tree with a slightly weaker stretch guarantee, namely
$\st_T(G) = O( \sff^2 m\log^3 n  )$. The construction goes through
the  computation of incremental {\em cut} sparsifiers,
which may be of independent interest.

\medskip
\noindent \textbf{A cut-based characterization of stretch.} We start by giving a simple alternative characterization of the stretch of a graph over a tree; for this we will need some notation.  Let $G=(V,E_G)$ be a graph and $T=(V,E_T)$ be a spanning tree of $G$. Every edge $e\in E_T$ defines in the obvious way a partition of $V$ into two sets $V_e$ and $V-V_e$. Indeed,
removing the edge disconnects the tree, and the partition is formed by the
vertices in the  two connected components; we arbitrarily let $V_e$ be
the vertices in one of them.
 Let $\ccap_G(V_e,V-V_e)$ be the total weight of the edges in $G$ with endpoints in $V_e$ and $V-V_e$. We have
\begin{equation} \label{eq:alt_stretch}
    \st_T(G) =  \sum_{e \in E_T} w_e^{-1} \ccap_G(V_e,V-V_e).
\end{equation}

To see why, let us go back to the definition of $\st_T(G)$ given in Section \ref{sec:background}. We have
$$
    \st_T(G) = \sum_{e' \in E_G} w_{e'} \sum_{e \in p_T(e)} w_e^{-1},
$$
where $p_T(e)$ is the unique path in $T$ between the two endpoints of $e$. It is then clear that $\st_T(G)$ is a sum of terms of the form
$w_{e'}w_{e}^{-1}$
for $e' \in E_G$ and $e\in E_T.$ Instead of grouping the terms  with respect to $e'\in E_G$ as is customary in the above definition, we group them with respect to $e \in E_T$. It can be seen that for any fixed $e\in E_T$,
the term $w_{e}^{-1}$ appears as a factor multiplying $w_{e'}$ for each edge $e' \in E_G$ that has its two endpoints in $V_e$ and $V-V_e$, precisely when $p_T(e')$ has to use $e$. This directly gives us the alternative characterization in Equation \ref{eq:alt_stretch}.

\medskip
\noindent \textbf{Incremental cut sparsifiers.} Inspired by the incremental spectral sparsifiers of \cite{kmplinsolve}, we give an analogous construction of incremental cut sparsifiers, i.e.\ sparsifiers that approximately preserve cuts but are only mildly sparser relative to the input graph. We will say that a graph $H$ is a $\tau$-cut approximation of $G$ if  for all $S\subseteq V$ we have
$$
   \tau \cdot \ccap_H(S,V-S) > \ccap_G(S,V-S).
$$

We claim the following Lemma.
\begin{lemma} \label{th:cut-approximation}
 There is an $O( \tff^2 \log^3 n  )$-cut approximation $H$ of a graph $G$ with $O(m/(\tff \log n) + n)$ edges. The graph $H$ can be computed in $O(m)$ time with high probability.
\end{lemma}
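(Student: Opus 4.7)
The algorithm I propose is: (i) compute a maximum-weight spanning tree $T$ of $G$ in $O(m)$ time using a randomized linear-time MST procedure on weights $-w_e$; (ii) form $H$ by keeping every edge of $T$ at its original weight and independently including each off-tree edge in $H$ with probability $p = \Theta(1/(\tff \log n))$, keeping its original weight. The expected number of off-tree edges retained is $(m-n+1)p = O(m/(\tff \log n))$, and a Chernoff bound on the independent Bernoulli indicators gives $|E(H)| = O(m/(\tff \log n) + n)$ with high probability.

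\textbf{Key MaxST property.} The structural fact driving the analysis is the cycle rule of a maximum spanning tree: for every off-tree edge $e=(u,v)$ and every tree edge $f$ on the unique $u$-$v$ path in $T$, we have $w_e \le w_f$. Fix any cut $(S,V\setminus S)$, let $w_T(S)$ and $w_O(S)$ denote the total tree and off-tree weight crossing it, and let $M(S)$ be the maximum weight of a tree edge crossing. The tree path of every off-tree edge crossing the cut must itself cross the cut at some tree edge (necessarily heavier than it), so every such off-tree edge $e$ satisfies $w_e \le M(S)$; trivially $w_T(S) \ge M(S)$, and consequently $w_O(S) \le |E_O(S)|\cdot M(S)$.

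\textbf{Per-cut approximation.} Let $Y(S) = \sum_{e \in E_O(S)} X_e w_e$ with $X_e \sim \mathrm{Bern}(p)$ independent, so $\ccap_H(S) = w_T(S) + Y(S)$ and $\mathbb{E}[Y(S)] = p\, w_O(S)$. Fix a threshold $A = \Theta(\tff^{2} \log^{3} n)$. In the \emph{small-cut regime} $w_O(S) \le A\cdot M(S)$, we have $\ccap_G(S) \le (A+1)\, w_T(S) \le O(A)\cdot \ccap_H(S)$ deterministically. In the \emph{large-cut regime} $w_O(S) > A\cdot M(S)$, each summand of $Y(S)$ lies in $[0, M(S)]$ and a multiplicative Chernoff bound yields
\begin{equation*}
\Pr\!\bigl[Y(S) < p\, w_O(S)/2\bigr] \;\le\; \exp\!\bigl(-\Omega(p\, w_O(S)/M(S))\bigr) \;\le\; \exp\!\bigl(-\Omega(pA)\bigr).
\end{equation*}
On the success event, $\ccap_H(S) \ge (p/2)\,\ccap_G(S)$, a factor $O(1/p) = O(\tff \log n)$ approximation.

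\textbf{Union bound, the main obstacle.} Lifting the per-cut bound to all $2^n$ cuts simultaneously is the technical crux, since a single $\exp(-\Omega(pA))$ tail is not small enough to union-bound naively. I would invoke Karger's cut-counting theorem: the number of cuts of value at most $\alpha$ times the minimum cut is at most $n^{O(\alpha)}$. Bucketing cuts by the dyadic scale of $M(S)$ (only $O(\log n)$ buckets, assuming a polynomial weight ratio) and within each bucket by total size, the Chernoff tail $\exp(-\Omega(p\,w_O(S)/M(S)))$ decays quickly enough (because $w_O(S)/M(S) \ge A$ inside that regime) to dominate the $n^{O(\alpha)}$ cut count by a geometric factor, and summation over the $O(\log^{2} n)$ bucket/size combinations yields total failure probability $n^{-\Omega(1)}$. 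It is precisely the bucketing overhead together with the need to keep Karger's geometric series convergent that forces the threshold $A = \Theta(\tff^{2}\log^{3} n)$ rather than the tighter $O(\tff \log^{2} n)$ suggested by a single-cut calculation, giving the stated approximation factor. The overall running time is $O(m)$, dominated by the MaxST computation and the $O(m)$-time sampling pass.
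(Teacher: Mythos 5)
Your algorithm and the paper's share the same two essential ideas: compute a maximum-weight spanning tree and exploit the cycle rule (every off-tree edge crossing a cut is dominated by some tree edge crossing it), then keep the tree and downsample the off-tree edges at rate $\Theta(1/(\tff\log n))$. Your per-cut analysis is also on the right track. But the union bound, which you correctly flag as the technical crux, has a real gap. The tail you obtain for a single cut is $\exp(-\Omega(p\,w_O(S)/M(S)))$, and the quantity $w_O(S)/M(S)$ is \emph{not} the cut value relative to a global minimum cut; Karger's basic cut-counting bound (``at most $n^{O(\alpha)}$ cuts of value $\le\alpha c_{\min}$'') is tethered to $c_{\min}$ of a single fixed graph, so it does not combine cleanly with an exponent that depends on the cut-local quantity $M(S)$. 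Your proposed fix---bucketing by dyadic scale of $M(S)$ and then by cut size---does not by itself give a graph in which Karger's counting applies at the right scale: cuts with small $M(S)$ still live in a graph whose global min cut can be tiny (a single light tree edge), so within a bucket the parameter $\alpha$ can be enormous. Repairing this requires localizing the cut-counting to pieces of the graph where the connectivity matches the edge weights, which is precisely Karger's strong-connectivity (``$c$-smoothness'') decomposition.

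That is exactly the route the paper takes, and it is worth seeing why it is cleaner. The paper does not keep the tree deterministically; instead it scales the tree edges up by $\lceil\tff\log^2 n\rceil$ (viewing each as that many unit-weight parallel copies) to make the whole multigraph $c$-smooth with $c = \Theta(\tff\log^2 n)$---the MaxST cycle rule is used to certify that every off-tree edge $e$ has strong connectivity $k_e \ge c\,w_e$ via the heavy tree path connecting its endpoints. Once $c$-smoothness is established, Karger's theorem for uniform sampling in $c$-smooth graphs (Lemma~\ref{th:uniform}) is invoked as a black box: it already packages the strength decomposition, the per-component cut counting, and the union over strength scales. This replaces the delicate bucketing you were attempting with a single citation, and the final $O(\tff^2\log^3 n)$ factor then falls out from composing the $O(\tff\log n)$ sampling loss with the $\tff\log^2 n$ tree-scaling loss. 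In short: your structural observation is the same as the paper's, but the missing ingredient in your write-up is reformulating it as $c$-smoothness so that the existing theorem carries the full union bound for you.
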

The algorithm and its proof is based on Karger's earlier work on cut sparsification \cite{Karger98}. Before we proceed with it, we review necessary definitions and a Lemma from \cite{Karger98}. In the following context graphs are allowed to have multiple edges.

A graph is $k$-connected if the value of each cut in $G$ is at least $k$. A $k$-strong component is a $k$-connected vertex induced subgraph of a graph. The strong connectivity of an edge $e$, denoted $k_e$, is the maximum value of $k$ such that a $k$-strong component contains $e$. Finally, a graph $G$ is said to be $c$-smooth if for every edge $e$, $k_e \geq c w_e$.

Now let $G(p)$ be the subgraph of $G$ resulting by keeping each edge of $G$ with probability $p$.
\begin{lemma} \textnormal{\cite{Karger98}} \label{th:uniform}
Let $G$ be a $c$-smooth graph.  Let $p = \rho_{\epsilon}/c$ where $\rho_\epsilon = O(\log n/\epsilon^2)$. Then with high probability every cut in $G(p)$ has a value in the range $(1\pm \epsilon)$ times its expectation (which is $p$ times its original value).
\end{lemma}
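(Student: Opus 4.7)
The plan is to reproduce Karger's sampling argument by combining a Chernoff bound on each individual cut with his cut-counting theorem, stratifying cuts by value so that the union bound converges despite there being exponentially many cuts.

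First I would reduce to a unit-weight multigraph: after scaling weights to rationals and then integers, replace each edge of weight $w_e$ by $w_e$ parallel unit-weight copies. Since the strong connectivity of each copy is the same as that of $e$, the hypothesis $k_e \geq c w_e$ becomes $k_e \geq c$ on every unit edge, i.e.\ the multigraph is $c$-smooth in the unit-weight sense. Independent sampling of the original weighted edge and independent sampling of its copies produce essentially the same concentration, so it suffices to prove the statement in the unit-weight setting (equivalently, one can apply a weighted Chernoff bound directly to sums of independent Bernoullis and skip the reduction). The key structural consequence I would extract is that in a $c$-smooth unit-weight graph every cut has value at least $c$: any cut that separates an edge $e$ also separates the $k_e$-strong component containing $e$, and the cut induced inside that component already has value $\geq k_e \geq c$.

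Given this lower bound on the minimum cut, Karger's cut-counting theorem applies: the number of cuts of value at most $\alpha c$ is at most $n^{2\alpha}$. I would then partition the cuts into classes $\mathcal{C}_k$ consisting of cuts whose value lies in $[kc,(k+1)c)$, so that $|\mathcal{C}_k| \leq n^{2(k+1)}$. For a single cut of value $W$, a standard multiplicative Chernoff bound gives
\[
  \Pr\bigl[\,|W(p) - pW| > \epsilon\, pW\,\bigr] \;\leq\; 2\exp\!\bigl(-\epsilon^2 pW/3\bigr).
\]
Substituting $p = \rho_\epsilon/c$ with $\rho_\epsilon = C\log n/\epsilon^2$ and $W \geq kc$, the failure probability for any cut in $\mathcal{C}_k$ is at most $2\exp(-Ck\log n/3) = 2 n^{-Ck/3}$.

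The final step is the union bound
\[
  \sum_{k\geq 1} |\mathcal{C}_k| \cdot 2 n^{-Ck/3}
  \;\leq\; 2\sum_{k\geq 1} n^{2(k+1) - Ck/3}
  \;=\; 2 n^2 \sum_{k\geq 1} n^{k(2 - C/3)}.
\]
Choosing $C$ large enough (say $C \geq 18$) makes the geometric sum decay as $O(n^{-2})$, which yields the claimed high-probability guarantee. The only delicate point is the reduction to the unit-weight case together with the derivation that the minimum cut is at least $c$; once this is in hand, cut counting plus Chernoff do the rest mechanically, and the constant inside $\rho_\epsilon$ is chosen to beat the cut-counting exponent $n^{2\alpha}$ uniformly over all value classes.
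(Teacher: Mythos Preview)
The paper does not prove this lemma: it is quoted from Karger~\cite{Karger98} and used as a black box in the proof of Lemma~\ref{th:cut-approximation}. There is thus no proof in the paper to compare your attempt against.

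Your sketch follows the standard Chernoff-plus-cut-counting template and is essentially correct \emph{in the unit-weight multigraph setting}. The one genuine gap is the passage from there to the weighted statement. Keeping a single edge of weight $w_e$ with probability $p$ is not the same experiment as independently keeping each of $w_e$ unit copies with probability $p$ --- the former has strictly larger variance --- so proving concentration for the multigraph does not automatically yield it for $G(p)$ as defined in the lemma. Your suggested fallback, a direct weighted Chernoff bound, does not close the gap as written either: for a cut of value $W$, smoothness only tells you that the heaviest crossing edge satisfies $w_{\max}\le W/c$, so the Chernoff exponent $\epsilon^2 pW/(3w_{\max})$ is bounded below merely by $\epsilon^2\rho_\epsilon/3$, a quantity that does \emph{not} grow with $W$. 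That fixed $n^{-\Theta(1)}$ failure probability per cut cannot beat the $n^{2\alpha}$ cut count via a naive union bound stratified by value. Karger's actual argument in~\cite{Karger98} handles this by decomposing the graph into its strong components and applying cut counting \emph{locally} within each component (where the local minimum cut is large enough to drive the exponent); the laminar structure of the strong-component hierarchy then controls the total number of terms in the union bound. Your derivation that the global min cut is at least $c$ is correct and is one ingredient of this, but it is not by itself sufficient in the weighted case.
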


We now proceed with the algorithm and its proof.

\begin{proof} (of  Lemma \ref{th:cut-approximation})
We can assume without loss of generality that the edge weights in $G$ are integers. We first find a max-weight spanning tree $T$ of $G$; since the weights are integer this can be done in $O(m)$ time. We then form an intermediate graph $G'$ by multiplying
the weight of every edge in $T$ by $\lceil \tff \log^2 n \rceil$; let $T'$ be $T$ with the scaled up weights.

Consider now an edge $e$ in $G'$ which is not in $T'$. Let $p_{T'}(e)$ be the path connecting the endpoints of $e$ in $T'$. Because $T$ is a maximum-weight spanning tree in $G$ every edge along $p_T(e)$ has weight at least $w_e$ in $G$. Therefore the subgraph of $G'$ induced by the vertices in $p_{T'}(e)$ is $(w_e  \lceil \tff \log^2 n \rceil)$-strong. Therefore the connectivity $k_e$ satisfies $k_e\geq  ( \lceil \tff \log^2 n \rceil) w_e$.

In order to be able to apply Lemma \ref{th:uniform} we will modify $G'$ to be a multigraph, by viewing  each $e\in T'$ as  $ \lceil \tff \log^2 n \rceil$ parallel edges of weight $w_e$. Under this definition, the connectivity of each such parallel edge $e'$ trivially satisfies $k_{e'} \geq ( \lceil \tff \log^2 n \rceil) w_e$. The same holds for all other edges of $G'$ as shown above. It follows that the multi-graph $G'$ is $O(\tff \log^2 n)$-soft.

We can now apply Lemma \ref{th:uniform}, setting $\epsilon = 1/2$ to form $G(p)$ for $p = O(1/(\tff \log n) )$. We get that $(2/3)G(p)$ is an $O(\tff \log n)$-cut approximation for $G'$. By an easy transitivity argument, the graph $H=2G(p)/(3 \lceil \tff \log^2 n \rceil)$ is then an  $O(\tff^2 \log^3 n)$-cut approximation of $G$. The claim about the number of edges of $H$ follows by application of Chernoff's inequality.
\end{proof}

\medskip
\noindent \textbf{Computing a low-stretch tree faster.} We conclude this subsection with the main Lemma.

\begin{lemma} \label{th:lowish-stretch}
Given a graph $G$, a spanning tree $T$ such that $\st_T(G) = O(\tff^2 m \log^3 n)$ can be computed in $O(m)$ time.
\end{lemma}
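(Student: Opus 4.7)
The plan is to reduce the edge count using the incremental cut sparsifier of Lemma \ref{th:cut-approximation}, then run the Abraham--Neiman low-stretch-tree algorithm on the sparser graph, and finally transfer the stretch bound back to $G$ by combining the cut-based characterization of stretch in \eqref{eq:alt_stretch} with an accounting of how edge weights change in the construction of $H$.

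First, I would invoke Lemma \ref{th:cut-approximation} to produce in $O(m)$ time a graph $H$ with $|E_H| = O(m/(\tff \log n) + n)$ edges satisfying $\ccap_G(V_e, V - V_e) \leq O(\tff^2 \log^3 n)\,\ccap_H(V_e, V - V_e)$ for every cut induced by removing a tree edge. Because every edge in $H$ corresponds as a vertex pair to an edge of $G$, any spanning tree of $H$ is also a spanning tree of $G$. I would then run Abraham--Neiman on $H$ to obtain a spanning tree $T$ with $\st_T(H) = O(\sff |E_H| \log n) = O(\sff m / \tff + \sff n \log n)$; this step costs $O(\tff |E_H| \log n) = O(m + \tff n \log n)$, which is $O(m)$ in the density regime $m = \Omega(\tff n \log n)$ under which subsequent sections will invoke this lemma.

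The stretch of $T$, viewed as a spanning tree of $G$ with $G$-weights, is then analyzed using \eqref{eq:alt_stretch} and the cut approximation:
\[
   \st_T(G) \;=\; \sum_{e \in E_T}(w_e^G)^{-1} \ccap_G(V_e, V - V_e)
   \;\leq\; O(\tff^2 \log^3 n) \sum_{e \in E_T}(w_e^G)^{-1} \ccap_H(V_e, V - V_e).
\]
To convert this into a multiple of $\st_T(H)$, I would establish the weight-ratio estimate $w_e^H / w_e^G = O(1/(\tff \log n))$ uniformly over $e \in E_H$. Non-max-weight-tree edges pick up an exact factor of $2/(3 \lceil \tff \log^2 n \rceil)$ in the construction; for edges of the max-weight spanning tree, the surviving parallel-copy count $N_e$ is binomial with mean $kp = \Theta(\log n)$, so a Chernoff bound and union bound over the $n-1$ tree edges yield $N_e = O(\log n)$ with high probability, giving the same ratio after dividing by $k = \lceil \tff \log^2 n \rceil$. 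Substituting $(w_e^G)^{-1} = O((w_e^H)^{-1} /(\tff \log n))$ above yields
\[
   \st_T(G) \;\leq\; O(\tff \log^2 n) \cdot \st_T(H) \;=\; O(\sff m \log^2 n + \sff \tff n \log^3 n),
\]
which is $O(\tff^2 m \log^3 n)$ using $n \leq m+1$ and $\sff \leq \tff$.

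The main obstacle is establishing the weight-ratio estimate uniformly over every edge of $H$: the per-edge Chernoff bound must be merged with a union bound and with the high-probability guarantees of Lemma \ref{th:cut-approximation}, so that all required events hold simultaneously with inverse polynomial failure probability. A minor subtlety that I would flag rather than resolve is that the $O(m)$ running time implicitly assumes $m \gtrsim \tff n \log n$, matching the density assumptions of the applications in Section~\ref{sec:fastest}.
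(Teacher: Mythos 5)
Your proof is correct and follows the same route as the paper: build the incremental cut sparsifier $H$ of Lemma~\ref{th:cut-approximation}, run the Abraham--Neiman algorithm on $H$, and transfer the stretch bound back to $G$ via the cut-based identity~\eqref{eq:alt_stretch} combined with the cut-approximation property. The Chernoff/union-bound argument you give for the weight ratio $w_e^H/w_e^G = O(1/(\tff\log n))$ is correct and even yields a slightly sharper stretch bound, but it is not needed: the paper instead uses the deterministic observation that $w_e^H \le w_e^G$ for every surviving edge (at most $\lceil\tff\log^2 n\rceil$ parallel copies survive and the weights are then divided by $3\lceil\tff\log^2 n\rceil/2$), which already gives $\sum_{e\in E_T}(w_e^G)^{-1}\ccap_H(V_e,V-V_e) \le \st_T(H)$ and hence the claimed $O(\tff^2 m\log^3 n)$ bound.
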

\begin{proof}
We first produce in $O(m)$ time the graph $H$ of Lemma \ref{th:cut-approximation}.
We then apply the low-stretch algorithm of \cite{AbrahamN12} on graph $H$ to get a spanning tree $T'$ of $H$;
given the number of edges in $H$ this step takes $O(m)$ time as well, while $T'$ satisfies
$
   \st_{T'}(H) = O(m).
$
Notice now that the edge weights in $H$ are by definition smaller than those in $G$.  Therefore, the corresponding
tree $T$ in $G$ (i.e. the tree with the original weights) satisfies
$$
   \st_{T}(H) = O(m).
$$
Given the number of edges in $H$ the low-stretch algorithm runs in $O(m)$ time. Then using the definition of cut approximation, and equation \ref{eq:alt_stretch} we get that
\begin{eqnarray*}
   \st_T(G) & = &  \sum_{e \in E_T} w_e^{-1} \ccap_G(V_e,V-V_e).  \\
   & \leq &  O(\tff^2 \log^3 n ) \sum_{e \in E_T} w_e^{-1} \ccap_H(V_e,V-V_e)\\
   & = &  O(\tff^2 \log^3 n )\cdot \st_T(H) \\
   & = & O(\tff^2 m \log^3 n).
\end{eqnarray*}
\end{proof}

\subsection{Leveraging an approximate sparsifier} The purpose of this
section is to show that
if we are given an {\em approximate sparsifier} for a graph $G$
we can efficiently produce
 a sparsifier for $G$.  So far we have been using only low-stretch \emph{subgraphs} of $G$ to get
approximations to the effective resistances in $G$.  A key to our fastest algorithm
is the realization that we can find low-stretch trees that are not necessarily subgraphs     of
the given graph $G$; the total stretch will actually be only a near-linear function of $n$.
This is based on the following Lemma.

\begin{lemma} \label{th:stretchdom}
Let $H' \preceq H $. Then for any tree $T$
$$
    \st_{T}(H') \leq \st_{T}(H).
$$
\end{lemma}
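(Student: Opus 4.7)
The plan is to prove the lemma by leveraging the cut-based characterization of stretch established in equation~\eqref{eq:alt_stretch} just above, which expresses $\st_T(G)$ purely in terms of cut capacities of $G$ across the fundamental cuts of $T$. Since cut capacities are quadratic forms of the graph Laplacian, the hypothesis $H' \preceq H$ (a statement purely about Laplacian quadratic forms) will translate directly into a pointwise comparison of the summands defining $\st_T(H')$ and $\st_T(H)$.

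Concretely, I would first recall that for any graph $G$ on the vertex set of $T$ and any subset $S \subseteq V$, the cut capacity satisfies
\begin{equation*}
  \ccap_G(S, V - S) \;=\; \chi_S^T L_G\, \chi_S,
\end{equation*}
which follows immediately from expanding $\chi_S^T L_G \chi_S = \sum_{(i,j) \in E_G} w_{ij}(\chi_S(i) - \chi_S(j))^2$ and noting that the only surviving terms are those crossing the cut. The assumption $L_{H'} \preceq L_H$ then gives, by evaluating both sides at $x = \chi_S$,
\begin{equation*}
  \ccap_{H'}(S, V-S) \;\leq\; \ccap_{H}(S, V-S) \qquad \text{for every } S \subseteq V.
\end{equation*}

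With this inequality in hand, the lemma is immediate: apply it to each of the $n-1$ fundamental cuts $(V_e, V - V_e)$ induced by the tree edges $e \in E_T$, and plug into the cut-based expression \eqref{eq:alt_stretch}. Since the weights $w_e$ appearing in $\st_T(\cdot)$ are the weights of tree edges of $T$ and do not depend on $H'$ or $H$, the factors $w_e^{-1}$ are common to both sums, and we obtain
\begin{equation*}
  \st_T(H') \;=\; \sum_{e \in E_T} w_e^{-1}\, \ccap_{H'}(V_e, V - V_e) \;\leq\; \sum_{e \in E_T} w_e^{-1}\, \ccap_{H}(V_e, V - V_e) \;=\; \st_T(H).
\end{equation*}

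There is no real obstacle here; the content of the lemma is simply the observation that the cut reformulation of stretch reduces a monotonicity statement about the Loewner order to a pointwise cut comparison, which is trivial once cut capacities are recognized as Laplacian quadratic forms evaluated at $\{0,1\}$-vectors.
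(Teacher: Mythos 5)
Your proof is correct, but it takes a genuinely different route than the paper's. The paper proves the lemma algebraically: it invokes the Spielman--Woo identity $\st_T(G) = \trace(L_G L_T^+)$, rewrites this as $\sum_i \lambda_i\bigl(L_T^{+/2} L_G L_T^{+/2}\bigr)$, and then uses the eigenvalue monotonicity fact $A \preceq B \Rightarrow \lambda_i(A) \leq \lambda_i(B)$ to conclude. You instead reuse the cut-based characterization \eqref{eq:alt_stretch} that the paper establishes one subsection earlier, observe that $\ccap_G(S, V - S) = \chi_S^T L_G \chi_S$, and deduce the inequality term by term. Both are valid. One small point worth flagging: the paper states \eqref{eq:alt_stretch} under the heading ``let $T$ be a spanning tree of $G$,'' whereas in Lemma~\ref{th:stretchdom} the tree $T$ need not be a subgraph of $H$ or $H'$; but the regrouping derivation of \eqref{eq:alt_stretch} never actually uses $E_T \subseteq E_G$, only that $T$ is a tree on the vertex set $V$, so the formula applies verbatim and your argument goes through.

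The interesting difference is what each proof reveals about the hypothesis. The paper's trace argument uses the full strength of $L_{H'} \preceq L_H$ as a Loewner inequality over all of $\mathbb{R}^n$. Your proof only tests that inequality at the $n-1$ indicator vectors $\chi_{V_e}$ of the fundamental cuts of $T$, i.e.\ it only needs $\ccap_{H'}(V_e, V-V_e) \leq \ccap_{H}(V_e, V-V_e)$ for those cuts. So your argument actually proves a strictly stronger statement: $\st_T(\cdot)$ is monotone under \emph{cut} dominance, not merely spectral dominance. This is more elementary, more self-contained (no appeal to Spielman--Woo or the eigenvalue-interlacing fact from the cited thesis), and fits more naturally with the cut-sparsifier machinery developed in the surrounding subsection.
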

\begin{proof}
 Let $A^+$ denote the Moore-Penrose pseudoinverse of a matrix $A$ and $\lambda_i(A)$ denote the $i^{th}$ largest
 eigenvalue of $A$.  By Spielman and Woo \cite{SpielmanW09} we know that for any graph $G$
 $$
      \st_{T}(G) = \trace(L_GL_T^+) = \sum_i \lambda_i (L_GL_T^+)
 $$
 Because $L_G$ and $L_T$ have the same null space (the constant vector), we can write
 $$
    \lambda_i(L_GL_T^+) = \lambda_i (L_T^{+/2} L_GL_T^{+/2}).
 $$
 Notice now that we have
 $$
     H' \preceq H  \Rightarrow (L_T^{+/2} L_{H'} L_T^{+/2}) \preceq (L_T^{+/2} L_HL_T^{+/2}).
 $$
 This follows easily by definition. It is
  also easy to prove (see for example \cite{Koutis-thesis}, Ch. 6.1) that
 $$
      A \preceq B \Rightarrow \lambda_i(A) \leq \lambda_i(B).
 $$
 Hence
 \begin{eqnarray*}
      \lambda_i (L_T^{+/2} L_{H'} L_T^{+/2}) & \leq  & \lambda_i(L_T^{+/2} L_{H} L_T^{+/2}) \Rightarrow \\
      \trace(L_T^{+/2} L_{H'} L_T^{+/2}) & \leq & \trace(L_T^{+/2} L_{H} L_T^{+/2}) \Rightarrow \\
      \st_T(H') & \leq & \st_T(H).
 \end{eqnarray*}
\end{proof}

We are now ready to prove the main Lemma in this subsection. To avoid confusion we will use
$m_H$ to denote the number of edges of a graph $H$.

\begin{lemma} \label{th:leveraging}  (\textbf{Leveraging})
 Let $H'$ be a $\kappa$-approximation of $H$. Suppose
 we are given $\tilde{H}'$, a $4$-approximation of $H'$ with $O(n\log n)$ edges.
 Then we can construct an $(1\pm \epsilon)$-approximation of $H$ with $O(\sff \kappa n \log^3 n / \epsilon^2)$ edges
 in $O(m_H + \tff n\log^2 n)$ time. We can also construct
 am $(1\pm \epsilon)$-sparsifier of $H$ with $O(n\log n/\epsilon^2)$ edges in
 $O(m_H + \sff^2  \kappa n\log^\up n/\epsilon^2)$ time.
 \end{lemma}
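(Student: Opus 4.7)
The plan is to use $\tilde{H}'$ purely as a device for producing a single spanning tree $T$ whose stretches serve as combinatorial proxies for effective resistances in $H$. I would first compute a low-stretch spanning tree $T$ of $\tilde{H}'$ by the Abraham--Neiman algorithm; since $\tilde{H}'$ has only $O(n\log n)$ edges, this takes $O(\tff n\log^2 n)$ time and guarantees $\st_T(\tilde{H}') = O(\sff n\log^2 n)$. Then, in a single $O(m_H)$ pass over $H$ augmented with an offline LCA data structure on $T$, I compute $\st_T(e)$ for every edge $e=(u,v)\in H$, and feed the values $p'_e := 4\kappa\,\st_T(e)$ to \textsc{Sample}.

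The central technical step is to verify that $p'_e \geq w_e R^H(e)$, which is the hypothesis of the oversampling Theorem \ref{th:oversampling}. This is the place where the argument must absorb the fact that $T$ is not in general a subgraph of $H$, so the usual Rayleigh bound $\st_S(e) \ge w_e R^G(e)$ for subgraphs $S$ of $G$ does not apply to $H$ directly. Composing the two approximation hypotheses gives $H\preceq H'\preceq \tilde{H}'\preceq 4\kappa H$, so \eqref{eq:approxres} yields $R^H(u,v)\leq 4\kappa\,R^{\tilde{H}'}(u,v)$. Since $T$ is a spanning subgraph of $\tilde{H}'$, Rayleigh monotonicity gives $R^{\tilde{H}'}(u,v)\leq R^T(u,v)$; multiplying by $w_e$ and using $R^T(u,v)=\st_T(e)/w_e$ gives $w_e R^H(e)\leq 4\kappa\,\st_T(e)$, as desired.

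To control the sample count, note $\sum_e p'_e = 4\kappa\,\st_T(H)$, and since $H\preceq \tilde{H}'$, Lemma \ref{th:stretchdom} yields $\st_T(H)\leq \st_T(\tilde{H}') = O(\sff n\log^2 n)$. Theorem \ref{th:oversampling} therefore returns, in $O(m_H + \tff n\log^2 n)$ total time, a $(1\pm\epsilon)$-sparsifier of $H$ with $O(\sff\kappa n\log^3 n/\epsilon^2)$ edges, establishing the first half of the lemma.

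For the second half, I would re-sparsify the first sparsifier using the general-case $O(\sff m\log^2 n)$ algorithm from Section \ref{sec:sparsification} at precision $\Theta(\epsilon)$; applied to a graph with $O(\sff\kappa n\log^3 n/\epsilon^2)$ edges this costs $O(\sff^2\kappa n\log^5 n/\epsilon^2)$ time and produces $O(n\log n/\epsilon^2)$ edges. Transitivity of $(1\pm\epsilon)$-sparsification (Section \ref{sec:background}) then combines the two stages into a single $(1\pm\epsilon)$-sparsifier of $H$. I expect the principal conceptual obstacle to be exactly the step where $T$ lies outside $H$; this is resolved cleanly by coupling \eqref{eq:approxres}, Rayleigh monotonicity, and Lemma \ref{th:stretchdom}.
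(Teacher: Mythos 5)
Your proposal is correct and follows the same outline as the paper: compute a low-stretch tree $T$ of $\tilde{H}'$, evaluate the stretch of each edge of $H$ over $T$ by offline LCA, feed suitably inflated stretches to Theorem~\ref{th:oversampling}, and re-sparsify with the general-case algorithm. The place where your write-up is noticeably cleaner than the paper's is the bookkeeping around $\kappa$. You put $\kappa$ into the sampling frequencies, $p'_e = 4\kappa\,\st_T(e)$, and verify the oversampling hypothesis $p'_e \ge w_e R^H(e)$ from the full chain $H \preceq H' \preceq \tilde{H}' \preceq 4H' \preceq 4\kappa H$ together with Rayleigh monotonicity on $T \subseteq \tilde{H}'$; you then bound the sample count with the tight inequality $\st_T(H) \le \st_T(\tilde{H}')$ via Lemma~\ref{th:stretchdom} applied to $H \preceq \tilde{H}'$. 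The paper instead takes $p'_e = 4 w_e R^T(e)$ and justifies $R^H(e) \le R^{H'}(e)$ by writing ``$H' \preceq H$,'' but the approximation hypothesis actually gives $H \preceq H'$, which by \eqref{eq:inverse} yields the opposite inequality $R^{H'}(e) \le R^H(e)$; what one actually has is $R^H(e) \le \kappa R^{H'}(e) \le 4\kappa R^{\tilde{H}'}(e) \le 4\kappa R^T(e)$, exactly your bound. The paper recovers the missing factor of $\kappa$ later by using the looser bound $\st_T(H) \le \kappa\,\st_T(\tilde{H}')$, so its final edge count and running time agree with yours, but as written its choice of $p'_e$ does not satisfy the hypothesis of Theorem~\ref{th:oversampling}. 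Your version places the $\kappa$ correctly and closes that gap.
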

\begin{proof}
 We compute a low-stretch spanning tree $T$ of $\tilde{H}'$ in $O(\tff n\log^2 n)$
 time. Because $\tilde{H}'$ has $O(n\log n)$ edges we have
 $$
     \st_T(\tilde{H}') = O(\sff n\log^2 n).
 $$

 We then compute in $O(m_H)$ time the effective resistance $R^T(e)$ of each edge $e$ of $H$ over $T$.
 This can be done in $O(m_H)$ time using off-line LCA algorithms
by Gabow and Tarjan \cite{Tarjan79, GabowT83}.
 Since $\tilde{H}' \preceq 4H'$ and $H' \preceq H$ we can apply \eqref{eq:1pmepsilon} twice and get
 $$
     R^H(e) \leq R^{H'} (e) \leq 4 R^{\tilde{H}'}(e).
 $$
 By Rayleigh's monotonicity theorem (e.g. see \cite{kmplinsolve}) we get that:
 $$
    R^{\tilde{H}'}(e) \leq R^T(e).
 $$
 Hence setting $p_e' = 4 w_eR^{T}(e)$ in Theorem \ref{th:oversampling} allows us to sparsify $H$.
 To get a $(1\pm \epsilon)$-approximation of $H$, the number of samples we need to take is
 $O(q\log q/\epsilon^2)$ where $q= \sum_e 4w_e R^{T}(e)=4\st_T(H).$
 Since $H' \preceq \tilde{H}'$ and $H \preceq \kappa H'$, we apply Lemma \ref{th:stretchdom}
 twice and we get that
 $$
      \st_T(H) \leq \kappa \cdot \st_T(H') \leq \kappa \cdot \st_T(\tilde{H}') = O(\sff \kappa n\log^2 n).
 $$
 This proves the first claim.  The second claim follows via picking the appropriate $\epsilon$ and
 re-sparsifying the first sparsifier with the general
 case sparsification algorithm.
\end{proof}

\subsection{The $O(m)$ and $O(m\log \log n)$ time algorithms}

We are now ready to prove our main claims.

\begin{theorem} \label{th:linear}
 Given a graph $G$ we can compute  an $(1\pm \epsilon)$-spectral sparsifier of $G$ with $O(n\log n/\epsilon^2)$
 edges in $O(m+\tff^2 \sff^3 n\log^{10} n/\epsilon^2)$ time.
\end{theorem}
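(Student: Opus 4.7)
The plan is to combine the three tools developed in Section \ref{sec:fastest} in a single application of the Leveraging Lemma. Specifically, I would construct an approximate sparsifier $\tilde{H}'$ of a polylog-approximation $H'$ of $G$ in $O(m)$ time and then invoke the second claim of Lemma \ref{th:leveraging}.

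First I would use Lemma \ref{th:lowish-stretch} to compute in $O(m)$ time a spanning tree $T$ of $G$ with $\st_T(G) = O(\tff^2 m \log^3 n)$. Next I would feed $T$ into Theorem \ref{thm:incrsparsifier} with scaling factor $\kappa = \Theta(\tff^2 \sff \log^5 n)$, producing (again in $O(m)$ time) an incremental sparsifier $I$ that $2$-approximates $H' := G + \kappa T$. For this choice of $\kappa$, the theorem gives $I$ with $n - 1 + O(m/(\sff \log n))$ edges; since $G \preceq H' \preceq (\kappa+1)G$, the graph $I$ is also an $O(\kappa)$-approximation of $G$.

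Because $I$ is still too dense to be passed directly as $\tilde{H}'$ into the Leveraging Lemma, I would sparsify it further by running Theorem \ref{th:first} on $I$ with a constant accuracy parameter $\epsilon_0 = 1/4$, obtaining a $(1\pm 1/4)$-sparsifier $\tilde{I}$ of $I$ with $O(n\log n)$ edges. Since $I$ has only $O(m/(\sff \log n))$ edges, this invocation of Theorem \ref{th:first} runs in $O(\tff m/\sff + \sff^2 n \log^5 n) = O(m + \sff^2 n \log^5 n)$ time, using that $\tff = O(\sff)$ for currently known constructions. A suitable constant rescaling then yields a graph $\tilde{H}'$ with $O(n\log n)$ edges satisfying $H' \preceq \tilde{H}' \preceq 4 H'$, as required by the hypothesis of Lemma \ref{th:leveraging}.

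Finally I would invoke Lemma \ref{th:leveraging} with $H = G$, the $H'$ and $\tilde{H}'$ above, and $\kappa = \Theta(\tff^2 \sff \log^5 n)$. Its second claim outputs a $(1\pm \epsilon)$-sparsifier of $G$ with $O(n\log n/\epsilon^2)$ edges in time $O(m + \sff^2 \kappa n \log^5 n/\epsilon^2) = O(m + \tff^2 \sff^3 n \log^{10} n/\epsilon^2)$, matching the claimed bound. The main obstacle in this plan is the simultaneous balancing: the scaling $\kappa$ used in Step 2 must be large enough that the resulting incremental sparsifier is sparse enough for Theorem \ref{th:first} to run in linear time on it, yet small enough to keep the leveraging cost within the target. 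The value $\kappa = \Theta(\tff^2 \sff \log^5 n)$ is precisely what aligns the three cost terms and produces the stated running time.
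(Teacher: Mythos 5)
Your proof follows the paper's proof of Theorem~\ref{th:linear} essentially verbatim: compute the tree via Lemma~\ref{th:lowish-stretch}, build the incremental sparsifier $I$ of $G+\kappa T$ via Theorem~\ref{thm:incrsparsifier} with $\kappa=\Theta(\tff^2\sff\log^5 n)$, sparsify $I$ with Theorem~\ref{th:first} and rescale to get the $4$-approximation required by Lemma~\ref{th:leveraging}, and then leverage. The only difference is that you make explicit some bookkeeping the paper leaves implicit, such as the edge count $n-1+O(m/(\sff\log n))$ of $I$, the constant rescaling that turns the $(1\pm 1/4)$-sparsifier of the $2$-approximation $I$ into a genuine $4$-approximation of $G+\kappa T$, and the use of $\tff=O(\sff)$ (true for the Abraham--Neiman construction) to absorb the $\tff m/\sff$ term into $O(m)$; none of this changes the route.
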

\begin{proof}
Let $T$ be the spanning tree of $G$ provided by Lemma \ref{th:lowish-stretch}.
Let $H$ and $I$ be the two graphs from the application of Theorem \ref{thm:incrsparsifier}
on $G$ with respect to $T$ and $\kappa =O(\tff^2 \sff \log^5 n)$.
 We  use our  algorithm from Theorem~\ref{th:first} to 
sparsify $I$ and by transitivity we get a $4$-approximation $H'$ of
$H$ with $O(n\log n)$ edges in $O(m+ \sff^2 n\log^5 n)$ time.
Because $H$ is an $O(\tff^2 \sff \log^5 n)$-approximation
of $G$ we can `leverage' its sparsifier $H'$ to compute a $(1\pm\epsilon)$-sparsifier of $G$
via Lemma \ref{th:leveraging} (with $\kappa = O(\tff^2 \sff \log^5 n)$) in
$O(m+  \tff^2 \sff^3 n\log^{10} n/\epsilon^2)$ time.
\end{proof}

\begin{theorem} \label{th:unweighted}
 Given an unweighted graph $G$ we can compute  an $(1\pm \epsilon)$-spectral sparsifier of $G$ with $O(n\log n/\epsilon^2)$
 edges in $O(m+\sff^3 n\log^{8} n/\epsilon^2)$ time.
\end{theorem}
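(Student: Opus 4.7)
The plan is to mirror the proof of Theorem \ref{th:linear}, but replace the low-stretch spanning \emph{tree} of Lemma \ref{th:lowish-stretch} (whose total stretch carries two extra $\tff$ factors) with an $O(\log n)$-\emph{spanner} $S$, which is available only in the unweighted setting. A classical construction of Halperin and Zwick produces such an $S$ with $O(n)$ edges in $O(m)$ time; by definition of a spanner, $\st_S(e)=O(\log n)$ for every edge $e$ of $G$---no computation is needed to tabulate these stretches---and hence $\st_S(G)=O(m\log n)$, cleaner than anything obtainable from Lemma \ref{th:lowish-stretch}. This is the only place where the unweighted hypothesis is used.

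Given $S$, I would apply Theorem \ref{thm:incrsparsifier} with scale-up factor $\kappa=\Theta(\sff\log^3 n)$, producing $H=G+\kappa S$ and $I$ with $H\preceq I\preceq 2H$. Since $S\subseteq G$, the graph $H$ is an $O(\sff\log^3 n)$-approximation of $G$, and $I$ has $n-1+O(\st_S(G)\log n/\kappa)=O(n+m/(\sff\log n))$ edges, all in $O(m)$ time. Next I would sparsify $I$ using the second part of Theorem \ref{th:first} at a constant accuracy $\epsilon_0$; after a routine rescaling this yields a $4$-approximation $\tilde I$ of $H$ with only $O(n\log n)$ edges, in $O(\tff\, m_I\log n+\sff^2 n\log^5 n)=O(m+\sff^2 n\log^5 n)$ time (using $\tff,\sff=O(\log\log n)$ to bound $\tff\, m_I\log n$ by $O(m+\tff n\log n)$).

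The last step is to feed $\tilde I$ into the second half of the Leveraging Lemma \ref{th:leveraging}, with outer graph $G$, intermediate $\kappa$-approximation $H'=H$, and approximate sparsifier $\tilde H'=\tilde I$. Plugging in $\kappa=O(\sff\log^3 n)$ outputs a $(1\pm\epsilon)$-spectral sparsifier of $G$ with $O(n\log n/\epsilon^2)$ edges in time
\[
   O(m_H+\sff^2\kappa\, n\log^5 n/\epsilon^2)\;=\;O(m+\sff^3 n\log^8 n/\epsilon^2),
\]
which is precisely the claimed bound. I do not anticipate any genuine obstacle beyond the balancing of $\kappa$: shrinking it improves the approximation quality fed into the leveraging time but inflates $|I|$, while enlarging it does the opposite, and the choice $\kappa=\Theta(\sff\log^3 n)$ is what simultaneously pushes $|I|$ down to $O(m/(\sff\log n))$ and holds the leveraging term within the advertised $\sff^3 n\log^8 n/\epsilon^2$ budget. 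The remainder is routine transitivity of spectral approximations and arithmetic on the three imported bounds (Theorems \ref{thm:incrsparsifier}, \ref{th:first} and Lemma \ref{th:leveraging}).
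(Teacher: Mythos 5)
Your proposal is correct and follows essentially the same route as the paper's own proof: replace the low-stretch tree with the Halperin--Zwick $O(\log n)$-spanner (the one place the unweighted hypothesis is used), apply Theorem~\ref{thm:incrsparsifier} with $\kappa = \Theta(\sff\log^3 n)$, sparsify the resulting incremental sparsifier $I$ via Theorem~\ref{th:first}, and finish with the Leveraging Lemma~\ref{th:leveraging}. Your bookkeeping on $|I|$ and the running time of the intermediate sparsification is, if anything, slightly more explicit than the paper's (and you correctly flag the need to absorb the $\tff\,m_I\log n$ term using $\tff,\sff=O(\log\log n)$), but the decomposition and the choice of $\kappa$ coincide with the original argument.
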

\begin{proof}
Let $S$ be an $O(\log n)$-spanner of $G$, as discussed in
Section~\ref{sec:incrspars}. Such
an $S$ with $O(n)$ edges can be computed in $O(m)$ time \cite{halperin96}.
Let $H$ and $I$ be the two graphs from the application of Theorem \ref{thm:incrsparsifier}
on $G$ with respect to $S$ and $\kappa =O(\sff \log^3 n)$. We then use our  algorithm
from Theorem~\ref{th:first} to
sparsify $I$ and by transitivity we get a $4$-approximation $H'$ of
$H$ with $O(n\log n)$ edges in $O(m+ \sff^3 n\log^5 n)$ time.
Because $H$ is an $O(\sff \log^3 n)$-approximation
of $G$ we can `leverage' its sparsifier $H'$ to compute a $(1\pm\epsilon)$-sparsifier of $G$
via Lemma \ref{th:leveraging} (with $\kappa = O(\sff \log^3 n)$) in
$O(m + \sff^3 n\log^{8} n/\epsilon^2)$ time.
\end{proof}

\begin{theorem} \label{th:almostlinear}
 Given a graph $G$ we can compute an $(1\pm \epsilon)$-spectral sparsifier of $G$ with $O(n\log n/\epsilon^2)$
 edges in $O((m +\sff^2 n\log^5 n/\epsilon^2)\log \log n)$ time.
\end{theorem}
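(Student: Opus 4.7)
The plan is to chain together $t = O(\log\log n)$ single-step applications of the Leveraging Lemma (Lemma~\ref{th:leveraging}), each with $\kappa = 2$, along a geometrically shrinking family $H_0, H_1, \ldots, H_t = G$ in which $H_i$ is a $2$-approximation of $H_{i+1}$. The point is that a $\kappa=2$ leveraging step pays only $O(m + \sff^2 n\log^5 n / \epsilon^2)$ (as opposed to the $\kappa = O(\tff^2\sff\log^5 n)$ used in the single-shot $O(m)$-time algorithm), so $O(\log\log n)$ repetitions stay within our budget.

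Let $T$ be the spanning tree returned by Lemma~\ref{th:lowish-stretch}, built in $O(m)$ time with $\st_T(G) = O(\tff^2 m\log^3 n)$. Fix $\kappa_0 = \Theta(\tff^3 \log^5 n)$ and $t = \lceil \log_2 \kappa_0 \rceil + 1 = O(\log\log n)$, and set
\[
  H_i \;=\; G + 2^{\,t-i-1}\,T \quad (0 \leq i < t), \qquad H_t = G.
\]
The inequality $H_{i+1} \preceq H_i$ is immediate from $T \succeq 0$. The inequality $H_i \preceq 2 H_{i+1}$ reduces to $0 \preceq G$ for $i < t-1$, and to $T \preceq G$ for $i = t-1$; the latter holds because $T$ is a spanning subgraph of $G$ with unchanged weights. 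Hence $H_i$ is a $2$-approximation of $H_{i+1}$ at every step, and the chain reaches $H_t = G$ exactly.

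For the initial $\tilde{H}_0$, invoke Theorem~\ref{thm:incrsparsifier} with subgraph $T$ and parameter $\kappa_0$ to obtain in $O(m)$ time a graph $I$ with $H_0 \preceq I \preceq 2H_0$ and $|I| = n-1 + O(\st_T(G)\log n / \kappa_0) = O\!\left(n + m/(\tff\log n)\right)$ edges. Sparsifying $I$ to $O(n\log n)$ edges at constant precision $1/3$ via Theorem~\ref{th:first} costs $O(\tff |I|\log n + \sff^2 n\log^5 n) = O(m + \sff^2 n\log^5 n)$; composing the two approximations (and a constant rescaling) gives $\tilde{H}_0$, a $4$-approximation of $H_0$ with $O(n\log n)$ edges. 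Then walk the chain: for $i = 0, 1, \ldots, t-1$, apply Lemma~\ref{th:leveraging} with $H = H_{i+1}$, $H' = H_i$, $\tilde{H}' = \tilde{H}_i$, and $\kappa = 2$. At each intermediate step use precision $1/3$ to obtain (after a constant rescaling) the next $4$-approximation $\tilde{H}_{i+1}$ with $O(n\log n)$ edges in $O(m_{H_{i+1}} + \sff^2 n\log^5 n) = O(m + \sff^2 n\log^5 n)$ time, using the fact that each $m_{H_{i+1}} = O(m)$ since $H_{i+1}$ is just $G$ with tree weights rescaled. At the final step ($i = t-1$) use the target precision $\epsilon$ to output a $(1\pm\epsilon)$-sparsifier of $H_t = G$ with $O(n\log n/\epsilon^2)$ edges in $O(m + \sff^2 n\log^5 n/\epsilon^2)$ time. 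Summing the $t = O(\log\log n)$ leveraging calls (plus the initial sparsification, which fits into one such budget) yields total time $O((m + \sff^2 n\log^5 n / \epsilon^2)\log\log n)$.

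The main obstacle is arranging the chain so that every step simultaneously (i) has $\kappa = 2$, keeping the per-step overhead at $O(\sff^2 n\log^5 n/\epsilon^2)$; (ii) has $m_H = O(m)$, so that the additive $m_H$ term in Lemma~\ref{th:leveraging} does not blow up; and (iii) reaches $H_t = G$ in only $O(\log\log n)$ steps. The additive family $H_i = G + 2^{t-i-1}T$ satisfies all three constraints at once, with the one slightly delicate point being the final transition $H_{t-1} = G + T \preceq 2G = 2H_t$, which is precisely what the $T \preceq G$ containment buys us.
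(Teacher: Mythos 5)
Your proof is correct and follows essentially the same route as the paper: build the chain $H_i = G + c_i T$ with $c_i$ halving from roughly $\kappa_0$ down to $1$, bootstrap $\tilde{H}_0$ via the incremental sparsifier plus Theorem~\ref{th:first}, and then iterate the Leveraging Lemma $O(\log\log n)$ times with $\kappa=2$. You supply some details the paper's proof leaves implicit (explicit verification of the $2$-approximation at every step, in particular the last step $T\preceq G$, and the concrete construction of $\tilde H_0$), but the structure, parameters, and running-time accounting match.
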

\begin{proof}
Let $T$ be the spanning tree of $G$ provided by Lemma \ref{th:lowish-stretch}.
Let $H = G+ O(\tff^2 \sff \log^5 n) T$.
We construct a sequence of graphs,
$$
    H = H_0,H_1,\ldots,H_t= G
$$
where $H_i = G+ O(\tff^2 \sff \log^5 n) T/2^i$, for $i=1\ldots,t-1$ with some appropriate $t=O(\log \log n)$.
Notice that all graphs $H_i$ have $m$ edges, so this takes $O(m\log \log n)$ time.
For $i=0,\ldots ,t-1$, let $\tilde{H_i}$ denote a 4-approximation of $H_i$
with $O(n\log n)$ edges.  It is clear that $\tilde{H_0}$ can be computed in $O(m)$ time.
Provided now that we have $\tilde{H}_j$ we can apply Lemma \ref{th:leveraging} (with $\kappa=2$, $\epsilon=1/2$ and a proper scaling of the $(1\pm \epsilon)$-sparsifier)
to get a 4-approximation $\tilde{H}_{j+1}$ in $O(m+\sff^2 n\log^\up n)$ time.
From $H_{t-1}$ we produce a $(1\pm \epsilon)$-sparsifier of $H_t=G$
again by Lemma \ref{th:leveraging}, using the desired value for $\epsilon$.
Because we apply the algorithm of Lemma \ref{th:leveraging} $O(\log \log n)$
times, we get the claimed running time.
\end{proof}

\section{Final Remarks}

The original algorithm of Spielman and Teng remains the only known
combinatorial sparsification algorithm that does not rely on solving systems.
Designing a spectral sparsification algorithm
that does not depend  on a linear system solver and
 that outputs a very sparse graph with
$O(n\log n)$ or $O(n\log^2 n)$ edges is a challenging open problem.
Since achieving this may prove to be difficult, an alternate approach
could be algorithms that compute  very sparse $\kappa$-approximations
for small values of $\kappa$.
Such algorithms could play a significant role in the development of
more practical SDD solvers.

\section{Acknowledgments}
We would like to thank Jonathan Kelner and Gary Miller
for useful discussions and the anonymous referees for carefully reading
an earlier version of this paper.
Ioannis Koutis is supported by NSF CAREER award CCF-1149048.
Alex Levin is supported by a National Science Foundation
graduate fellowship. Richard Peng was at Microsoft Research New England
for part of this work and is supported by a Microsoft Research Fellowship.
He is also partially supported by the National Science Foundation under grant number CCF-1018463.

\bibliographystyle{plain}

\begin{thebibliography}{10}

\bibitem{AbrahamN12}
Ittai Abraham and Ofer Neiman.
\newblock Using petal-decompositions to build a low stretch spanning tree.
\newblock In Howard~J. Karloff and Toniann Pitassi, editors, {\em STOC}, pages
  395--406. ACM, 2012.

\bibitem{jl}
Dimitris Achlioptas.
\newblock Database-friendly random projections.
\newblock In {\em PODS '01: Proceedings of the Twentieth ACM
  SIGMOD-SIGACT-SIGART Symposium on Principles of Database Systems}, pages
  274--281, 2001.

\bibitem{bk}
Andr\'{a}s~A. Bencz\'{u}r and David~R. Karger.
\newblock Approximating $s$-$t$ minimum cuts in ${O}(n^2)$ time.
\newblock In {\em STOC '96: Proceedings of the Twenty-Eighth Annual ACM
  symposium on Theory of Computing}, pages 47--55, 1996.

\bibitem{chung07}
Fan Chung.
\newblock Random walks and local cuts in graphs.
\newblock {\em {Linear Algebra and its applications}}, 423(1):22--32, 2007.

\bibitem{FungHHP11}
Wai~Shing Fung, Ramesh Hariharan, Nicholas J.~A. Harvey, and Debmalya
  Panigrahi.
\newblock A general framework for graph sparsification.
\newblock In {\em STOC '11: Proceedings of the 43rd ACM Symposium on Theory of
  Computing}, pages 71--80, 2011.

\bibitem{GabowT83}
Harold~N. Gabow and Robert~Endre Tarjan.
\newblock A linear-time algorithm for a special case of disjoint set union.
\newblock In {\em Proceedings of the fifteenth annual ACM symposium on Theory
  of computing}, STOC '83, pages 246--251, New York, NY, USA, 1983. ACM.

\bibitem{halperin96}
Eran Halperin and Uri Zwick.
\newblock Linear time deterministic algorithm for computing spanners for
  unweighted graphs.
\newblock 1996.
\newblock manuscript.

\bibitem{Karger98}
David~R. Karger.
\newblock Better random sampling algorithms for flows in undirected graphs.
\newblock In {\em Proceedings of the ninth annual ACM-SIAM symposium on
  Discrete algorithms}, SODA '98, pages 490--499, Philadelphia, PA, USA, 1998.
  Society for Industrial and Applied Mathematics.

\bibitem{kl}
JonathanA. Kelner and Alex Levin.
\newblock Spectral sparsification in the semi-streaming setting.
\newblock {\em Theory of Computing Systems}, 53(2):243--262, 2013.

\bibitem{Koutis-thesis}
Ioannis Koutis.
\newblock {\em Combinatorial and algebraic tools for optimal multilevel
  algorithms}.
\newblock PhD thesis, Carnegie Mellon University, Pittsburgh, May 2007.
\newblock {CMU CS Tech Report CMU-CS-07-131}.

\bibitem{KoutisLP12}
Ioannis Koutis, Alex Levin, and Richard Peng.
\newblock Improved spectral sparsification and numerical algorithms for sdd
  matrices.
\newblock In {\em 29th International Symposium on Theoretical Aspects of
  Computer Science, STACS}, pages 266--277, 2012.

\bibitem{kmplinsolve}
Ioannis Koutis, Gary~L. Miller, and Richard Peng.
\newblock Approaching optimality for solving {SDD} systems.
\newblock In {\em FOCS '10: Proceedings of the 51st Annual IEEE Symposium on
  Foundations of Computer Science}, 2010.

\bibitem{kmp11}
Ioannis Koutis, Gary~L. Miller, and Richard Peng.
\newblock A nearly-$m\log n$ solver for {SDD} linear systems.
\newblock In {\em FOCS '11: Proceedings of the 52nd Annual IEEE Symposium on
  Foundations of Computer Science}, 2011.

\bibitem{effres}
Daniel~A. Spielman and Nikhil Srivastava.
\newblock Graph sparsification by effective resistances.
\newblock In {\em STOC '08: Proceedings of the 40th Annual ACM symposium on
  Theory of Computing}, pages 563--568, 2008.

\bibitem{SpielmanTeng04}
Daniel~A. Spielman and Shang-Hua Teng.
\newblock Nearly-linear time algorithms for graph partitioning, graph
  sparsification, and solving linear systems.
\newblock In {\em STOC '04: Proceedings of the 36th Annual ACM Symposium on
  Theory of Computing}, pages 81--90, 2004.

\bibitem{stlinsolve}
Daniel~A. Spielman and Shang-Hua Teng.
\newblock Nearly-linear time algorithms for preconditioning and solving
  symmetric, diagonally dominant linear systems.
\newblock {\em CoRR}, abs/cs/0607105, 2006.

\bibitem{SpielmanW09}
Daniel~A. Spielman and Jaeoh Woo.
\newblock A note on preconditioning by low-stretch spanning trees.
\newblock {\em CoRR}, abs/0903.2816, 2009.

\bibitem{Tarjan79}
Robert~Endre Tarjan.
\newblock Applications of path compression on balanced trees.
\newblock {\em J. ACM}, 26(4):690--715, 1979.

\end{thebibliography}

\end{document}